\newtheorem{remark}{Remark}
\newtheorem{theorem}{Theorem}
\newtheorem{lemma}{Lemma}
\newtheorem{corollary}{Corollary}
\newtheorem{proposition}{Proposition}
\title{Exploiting Sensing Signal in ISAC: \\ A NOMA Inspired Scheme}
\author{

        Zhaolin~Wang,~\IEEEmembership{Graduate Student Member,~IEEE,}
        Xidong~Mu,~\IEEEmembership{Member,~IEEE,}
        Yuanwei~Liu,~\IEEEmembership{Senior Member,~IEEE,}
        and Zhiguo~Ding,~\IEEEmembership{Fellow,~IEEE}

\thanks{Zhaolin Wang, Xidong Mu, and Yuanwei Liu are with the School of Electronic Engineering and Computer Science, Queen Mary University of London, London E1 4NS, U.K. (e-mail: zhaolin.wang@qmul.ac.uk, xidong.mu@qmul.ac.uk, yuanwei.liu@qmul.ac.uk).}
\thanks{Zhiguo Ding is with the School of Electrical and Electronic Engineering, The University of Manchester, Manchester M13 9PL, UK (e-mail: zhiguo.ding@manchester.ac.uk).}
\vspace{-1.5cm}
}
\begin{document}

\maketitle

\begin{abstract}
    A non-orthogonal multiple access (NOMA)-inspired integrated sensing and communication (ISAC) framework is proposed, where a dual-functional base station (BS) transmits the composite communication and sensing signals. In contrast to treating the sensing signal as a harmful interference to communication, in this work, multiple beams of the sensing signal are exploited to convey extra information streams based on the concept of NOMA. Then, each communication user detects the extra information streams and the existing legacy information streams with the aid of successive interference cancellation (SIC). Based on the proposed framework, a multiple-objective optimization problem (MOOP) is formulated for designing the transmit beamforming subject to the total transmit power constraint, which characterizes the trade-off between the communication throughput and sensing beampattern accuracy. For the general multiple-user scenario, the formulated MOOP is firstly converted to a single-objective optimization problem (SOOP) via the $\epsilon$-constraint method. Then, a double-layer block coordinate descent (BCD) algorithm is proposed by employing fractional programming (FP) and successive convex approximation (SCA) to find a high-quality sub-optimal solution. For the special single-user scenario, the globally optimal solution can be obtained by transforming the MOOP into a convex quadratic semidefinite program (QSDP). Moreover, it is rigorously proved that 1) in the multiple-user scenario, the proposed NOMA-inspired ISAC framework always outperforms the state-of-the-art sensing-interference-cancellation (SenIC) ISAC frameworks by further exploiting sensing signals for delivering extra information streams; 2) in the special single-user scenario, the proposed NOMA-inspired ISAC framework achieves the same performance as the existing SenIC ISAC frameworks, which reveals that the coordination of sensing interference is not necessarily required in this case. Numerical results verify the theoretical results and show that exploiting one beam of the sensing signal for delivering an extra information stream is sufficient for the proposed NOMA-inspired ISAC framework.
\end{abstract}

\begin{IEEEkeywords}
{B}eamforming design, integrated sensing and communication (ISAC), non-orthogonal multiple access (NOMA).
\end{IEEEkeywords}

\section{Introduction}
The beyond fifth generation (B5G) and sixth generation (6G) wireless networks are envisaged to enable a host of emerging applications like unmanned aerial vehicles (UAVs), autonomous driving, virtual and augmented reality, Internet of vehicles, and smart city \cite{letaief2019roadmap}. Therefore, in contrast to the past five generations of wireless networks that mainly support wireless communication, the B5G and 6G require a paradigm shift to the integration of multiple functions including communications, sensing, control, and computing \cite{saad2019vision}. Based on this vision, the concept of integrated sensing and communication (ISAC)\footnote{Note that the “sensing” in ISAC refers to wireless radio sensing.} has emerged and attracted growing attention in both academia \cite{liu2021integrated, cui2021integrating} and industries \cite{wild2021joint, tan2021integrated}. The goal of ISAC is to integrate communication and sensing via the same platform and the same spectrum, which is capable of increasing resource efficiency and achieving mutual benefits. On the one hand, the resources for communication and sensing are shared in the ISAC, thus improving the utilization efficiency in terms of spectrum, hardware, energy, and costs \cite{liu2021integrated}. On the other hand, the two functions can be interplayed with each other for realizing win-win operations, such as the sensing-assisted pilot-free communication channel estimation \cite{cui2021integrating} and the communication-assisted high-precision sensing \cite{liu2021integrated}.

\subsection{Prior Works}
Due to the resource sharing between the two functions of communication and sensing, there is a natural performance trade-off in ISAC. Therefore, the bound of the performance trade-off between communication and sensing was investigated in \cite{chiriyath2017radar} from the information-theoretic perspective, where the radar estimation rate was developed based on the rate-distortion theory to unify the metrics of communication and sensing. The authors of \cite{guerci2015joint} introduced the concept of radar capacity based on the Hartley capacity measure and investigated the total communication-radar capacity of the integrated system. Waveform design is another research aspect and plays a key role for ISAC to achieve harmonious resource sharing and integration gain \cite{liu2021integrated}. Early works of waveform design in ISAC focused on the single-antenna systems. For instance, the authors of \cite{moghaddasi2016multifunctional} designed a joint waveform for supporting the communication and sensing in a time-division manner, where the trapezoidal frequency-modulated continuous waveform was exploited for sensing in the radar cycle while the modulated single frequency carrier appeared in the communication cycle. In addition, the potential of the linear frequency modulated (LFM) waveform and orthogonal frequency division modulated (OFDM) waveform for ISAC was investigated in \cite{saddik2007ultra} and \cite{sturm2011waveform}, respectively.

Given the limitations of the single-antenna technique, the multiple-antenna technique, which has been respectively successfully employed in both communication \cite{tse2005fundamentals} and sensing \cite{li2007mimo}, is adopted in the recent research contributions on ISAC. By exploiting spatial degrees of freedom (DoFs) provided by the multiple-antenna technique, different beams can be generated to carry out multiple-user communication and multiple-target sensing. For example, the authors of \cite{liu2018mu} proposed two strategies for facilitating ISAC, namely the separated deployment and the shared deployment, where the transmit beamforming in each deployment was designed to achieve the high-quality sensing beampattern while satisfying the minimum communication requirement. As further research of the shared deployment, the closed-form optimal transmission beamforming for minimizing the inter-user interference subject to different sensing criteria was derived in \cite{liu2018toward}, based on which the communication and sensing performance trade-off was investigated. Moreover, the authors of \cite{chen2021joint} introduced a general Pareto optimization framework and proposed a bisection algorithm to find the optimal trade-off point of communication and sensing in terms of signal-to-interference-and-noise ratio (SINR) and the difference between peak and sidelobe (DPSL), respectively. Most recently, the application of non-orthogonal multiple access (NOMA) in the ISAC system was investigated in \cite{wang2022noma}, where multiple communication users are supported via superposition coding (SC) and successive interference cancellation (SIC). The authors of \cite{mu2021noma} considered a more sophisticated application scenario of NOMA in ISAC, namely sensing targets require the multicast messages from the BS, which can be non-orthogonally superimposed and transmitted simultaneously with the unicast messages for communication users.

However, only communication signal was exploited in \cite{liu2018mu, liu2018toward, chen2021joint, wang2022noma, mu2021noma} for facilitating ISAC. Despite having the advantage that no additional interference is received at the communication users, it may lead to sensing beampattern distortion due to the lack of transmitting DoFs, especially when the number of communication users is less than the number of transmit antennas \cite{liu2020beamforming}. As a remedy, the authors of \cite{liu2020beamforming} proposed to employ extra dedicated sensing signals to provide full transmit DoFs for sensing. The authors of \cite{liu2021cram} also employed the dedicated sensing signals to generate the full-rank transmit covariance matrix, which is exploited to estimate the parameters of the extended target via the maximum likelihood estimation. In \cite{hua2021optimal}, the optimal transmit beamforming design is studied, where it is proved and demonstrated that canceling the interference from sensing signal to communication can significantly improve the ISAC performance. 

\begin{table*}[t!]
    \caption{Our contributions compared with the state-of-the-art}
    \centering
    \setlength{\arrayrulewidth}{1pt}
    \begin{tabular}{|l|c|c|c|c|}
        \hline
                                              & \cite{liu2018mu, liu2018toward, chen2021joint, wang2022noma, mu2021noma}   & \cite{liu2020beamforming,liu2021cram}   &\cite{hua2021optimal}   & \textbf{Our work}                        \\ \hline
        Communication signal                  & $\surd$                                                                      & $\surd$                                 & $\surd$                & $\surd$                        \\ \hline
        Sensing signal                        & $\times$                                                                    & $\surd$                                 & $\surd$                & $\surd$                        \\ \hline
        Sensing interference cancellation     & $\times$                                                                    & $\times$                                & $\surd$                & $\surd$                        \\ \hline
        Sensing signal for communication      & $\times$                                                                    & $\times$                                & $\times$               & $\surd$                        \\ \hline
    \end{tabular}
    \label{table:contribution}
\end{table*}

\subsection{Motivation and Contributions}
Compared to the system which merely employs the communication signals for achieving the two functions, it has been shown that additional dedicated sensing signals are generally essential to better facilitate sensing in ISAC \cite{liu2020beamforming, liu2021cram, hua2021optimal}. This, however, also causes significant challenges in how to coordinate the sensing signals for supporting communications. In the existing ISAC frameworks, the sensing signal is regarded as a harmful interference to communication, which is mitigated through the beamforming design \cite{liu2020beamforming, liu2021cram} or the ideal interference-cancellation receiver \cite{hua2021optimal}. To the best of the authors' knowledge, the potential of exploiting the sensing signal to further convey information for benefiting the communication has not been investigated. This provides the main motivation for this work.

Note that the superposition of the communication and sensing signals in ISAC is a kind of \textit{non-orthogonal} resource sharing, which shares a similar idea with NOMA \cite{liu2017non, liu2018multiple}. For NOMA, the information-bearing signals are superimposed at the transmitter and the co-channel interference is partially or totally eliminated by the SIC at the receiver \cite{liu2021evolution}. Although there have been some works on the interplay between NOMA and ISAC \cite{wang2022noma} and \cite{mu2021noma}, they mainly focused on exploiting NOMA to achieve multiple access to different kinds of communication users. However, in this work, we go beyond the scope of multiple access and aim to use the principle of NOMA to coordinate the communication signal and the sensing signal. Based on this idea, we propose a NOMA-inspired ISAC framework. The key principle is that the sensing signal is further exploited for extra communication transmission and added to the existing legacy communication signal based on the mechanism of NOMA. The main benefits of the proposed NOMA-inspired ISAC framework are summarized as follows. On the one hand, since the sensing signal is information-bearing, the SIC technique can be employed by communication users to remove the interference suffering from the sensing signal. On the other hand, the sensing signal is not only used for sensing but also used for delivering extra information streams, thus also benefiting communications. Based on this framework, we investigate the corresponding transmit beamforming designs and the communication-sensing trade-off. The main contributions of this work are summarized below, which are boldly and explicitly contrasted to the state-of-the-art in Table \ref{table:contribution}.

\begin{itemize}
    \item We propose a novel NOMA-inspired ISAC framework, where a dual-functional base station (BS) transmits the information-bearing sensing signal and the legacy communication signals based on the concept of NOMA. The sensing signal is exploited for both extra information transmission and target sensing, and its interference at the communication users is eliminated via SIC. To investigate the fundamental trade-off between communication and sensing, we formulate a multiple-objective optimization problem (MOOP) of the communication throughput and the sensing beampattern matching error subject to the total transmit power constraint.

    \item For the multiple-user scenario, we convert the formulated MOOP to a single-objective optimization problem (SOOP) via the $\epsilon$-constraint method. To solve the resultant non-convex SOOP, we propose a double-layer block coordinate descent (BCD) algorithm employing fractional programming (FP) and successive convex approximation (SCA) to obtain a suboptimal solution.
    
    \item For the special single-user scenario, we reformulate the communication throughput and convert the related MOOP to a quadratic semidefinite program (QSDP) by employing 
    the $\epsilon$-constraint method, which is shown to be convex and can be globally optimally solved. 

    \item We theoretically prove that for achieving the same sensing performance, the proposed framework can enhance the communication performance compared to the existing sensing-interference-cancellation (SenIC) ISAC frameworks in the multiple-user scenario. Moreover, for the single-user scenario, we rigorously prove that the performance achieved by all schemes becomes the same, indicating that sensing interference coordination is not needed in this case.

    \item Our numerical results verify that in comparison to the existing ISAC framework, the proposed NOMA-inspired ISAC framework achieves a larger trade-off region and better transmit beampattern for sensing in the multiple-user scenario while having the same performance in the single-user scenario. Furthermore, it also shows that in the proposed framework, exploiting only one beam of the sensing signal for the extra information transmission is sufficient to achieve the maximum communication throughput.
\end{itemize}

\subsection{Organization and Notation}

The rest of this paper is organized as follows. In Section \ref{sec:system_model}, the proposed NOMA-inspired ISAC framework is presented. Then, a MOOP of transmit beamforming between communication throughput and sensing beampattern accuracy is formulated. In Section \ref{sec:solution}, a suboptimal solution to the formulated problem in the general multiple-user scenario is obtained by the proposed double-layer FP-SCA-based BCD algorithm, while a globally optimal solution in the special single-user scenario is obtained via QSDP. Then, the properties of the proposed framework are studied for the two scenarios in comparison to the state-of-the-art ISAC frameworks. In Section \ref{sec:results}, the numerical results are presented to characterize the proposed NOMA-inspired ISAC framework. Finally, this paper is concluded in Section \ref{sec:conclusion}. 

\textit{Notations:} 
Scalars, vectors, and matrices are denoted by the lower-case, bold-face lower-case,
and bold-face upper-case letters, respectively; 
$\mathbb{C}^{N \times M}$ and $\mathbb{R}^{N \times M}$ denotes the space of $N \times M$ complex and real matrices, respectively;
$a^*$ and $|a|$ denote the conjugate and magnitude of scalar $a$;  
$\mathbf{a}^H$ denotes the conjugate transpose of vector $\mathbf{a}$; 
$\mathbf{A} \succeq 0$ means that matrix $\mathbf{A}$ is positive semidefinite; 
$\mathrm{rank}(\mathbf{A})$, $\|\mathbf{A}\|_*$, and $\|\mathbf{A}\|_2$ denote the rank, nuclear norm, and spectral norm of matrix $\mathbf{A}$, respectively;
$\mathbb{E}[\cdot]$ denotes the statistical expectation; 
$\mathcal{CN}(\mu, \sigma^2)$ denotes the distribution of a circularly symmetric complex Gaussian (CSCG) random variable with mean $\mu$ and variance $\sigma^2$.

\section{System Model and Problem Formulation} \label{sec:system_model}

\subsection{System Model}
\begin{figure} [ht!]
  \centering
  \includegraphics[width=0.7\textwidth]{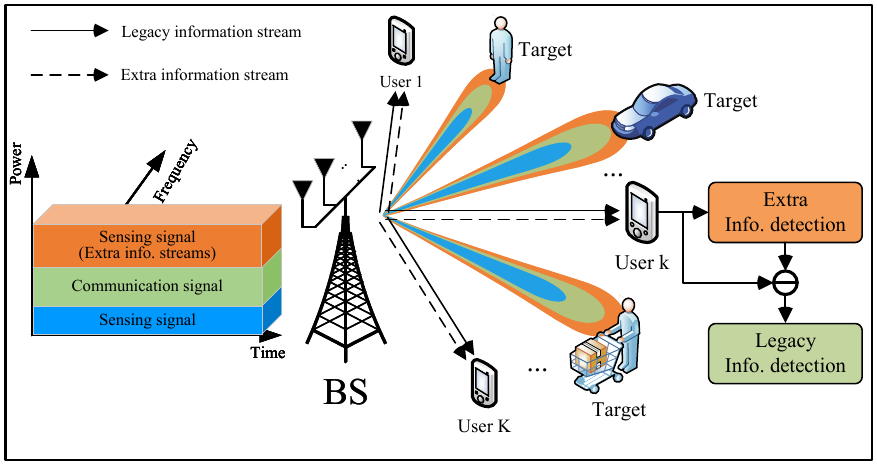}
   \caption{Illustration of the proposed NOMA-inspired ISAC framework.}
   \label{fig:system_model}
   \vspace{-0.3cm}
\end{figure}

As shown in Fig. \ref{fig:system_model}, a NOMA-inspired ISAC framework is proposed, which consists of a dual-functional $N$-antenna BS, $K$ single-antenna communication users indexed by $\mathcal{K} = \{1,\dots,K\}$, and multiple sensing targets. According to \cite{liu2020beamforming}, to realize the full DoFs of target sensing, the BS transmits the superimposed communication signals and sensing signals, which is given by 
\begin{equation} \label{eqn:transmis_signal}
  \mathbf{x}(t) = \mathbf{W}_c \mathbf{c}(t) + \mathbf{s}(t) = \sum_{k \in \mathcal{K}} \mathbf{w}_{c,k} c_k(t) + \mathbf{s}(t).
\end{equation}
Here, $\mathbf{W}_c = [\mathbf{w}_{c,1},\dots,\mathbf{w}_{c,K}] \in \mathbb{C}^{N \times K}$ denotes the performers for transmitting the information stream $\mathbf{c}(t) = [c_1(t),...,c_K(t)]^T \in \mathbb{C}^{K \times 1}$ at time index $t$, and $\mathbf{s}(t) \in \mathbb{C}^{N \times 1}$ denotes the sensing signal. We assume the Gaussian signaling for the information stream with zero mean and unit power, namely $\mathbb{E}[ \mathbf{c}(t) \mathbf{c}(t)^H ] = \mathbf{I}_K$ and $\mathbb{E}[ \mathbf{c}(t) \mathbf{s}(t)^H ] = \mathbf{0}_{K \times N}$. The covariance matrix of sensing signal $\mathbf{s}(t)$ is denoted by $\mathbf{R}_{\mathbf{s}} = \mathbb{E}[\mathbf{s}(t)\mathbf{s}(t)^H] \succeq 0$. Multiple-beam transmission is employed to construct the sensing signal. In this case, the sensing signal can be decomposed into multiple sensing beams via the eigenvalue decomposition:
\begin{equation}
    \mathbf{R}_{\mathbf{s}} = \sum_{i=1}^{N} \lambda_i \mathbf{v}_i \mathbf{v}_i^H  = \sum_{i=1}^{N} \mathbf{w}_{s,i} \mathbf{w}_{s,i}^H,
\end{equation}
where $\lambda_i \in \mathbb{R}$ is the eigenvalue and $\mathbf{v}_i \in \mathbb{C}^{N \times 1}$ is the corresponding eigenvector. The vector $\mathbf{w}_{s,i} = \sqrt{\lambda_i} \mathbf{v}_i$ is the transmit beamformer for the $i$-th beam of the sensing signal. Without loss of generality, we assume $\lambda_1 \ge \lambda_2 \ge \dots \ge \lambda_{N}$. Inspired by NOMA, we embed extra information stream $\tilde{\mathbf{c}}(t) = [\tilde{c}_1(t),\dots,\tilde{c}_M(t)]^T \in \mathbb{C}^{M \times 1}$ into the sensing beam corresponding to the largest $M$ eigenvalues on top of the existing legacy communication signals, as shown in Fig. \ref{fig:NOMA-inspired ISAC}. These extra information streams can be used for supporting additional services, such as enhancing the service quality of the users with high priority or broadcasting common messages. In this case, the sensing signal can be rewritten as follows:
\begin{equation} \label{eqn:sensing_multicast}
    \mathbf{s} = \mathbf{W}_s \tilde{\mathbf{c}}(t) + \tilde{\mathbf{s}}(t) = \sum_{i \in \mathcal{M}} \mathbf{w}_{s,i} \tilde{c}_i  + \tilde{\mathbf{s}}(t),
\end{equation}
where $\mathcal{M}=\{1,\dots,M\}$, $\mathbf{W}_s = [\mathbf{w}_{s,1},\dots,\mathbf{w}_{s,M}]$, and $\tilde{\mathbf{s}}$ denotes the remaining sensing signal without information. We also assume the Gaussian signaling with zero mean and unit power for the extra information streams $\tilde{\mathbf{c}}(t)$, namely $\mathbb{E}[ \tilde{\mathbf{c}}(t) \tilde{\mathbf{c}}(t)^H ] = \mathbf{I}_M$ and $\mathbb{E}[ \tilde{\mathbf{c}}(t) \tilde{\mathbf{s}}(t)^H ] = \mathbf{0}_{M \times N}$. As a consequence, the covariance matrix $\mathbf{R}_{\tilde{\mathbf{s}}}$ of $\tilde{\mathbf{s}}(t)$ is given as follows :
\begin{align} \label{eqn:remaining_interference}
    \mathbf{R}_{\tilde{\mathbf{s}}} &= \mathbb{E}\Big[ \big(\mathbf{s}(t) - \mathbf{W}_s \tilde{\mathbf{c}}(t)\big) \big(\mathbf{s}(t) - \mathbf{W}_s \tilde{\mathbf{c}}(t)\big)^H \Big] = \sum_{i=Q+1}^{N} \mathbf{w}_{s,i} \mathbf{w}_{s,i}^H.
\end{align}

\subsubsection{Communication Model}

\begin{figure} [t!]
  \centering
  \includegraphics[width=0.6\textwidth]{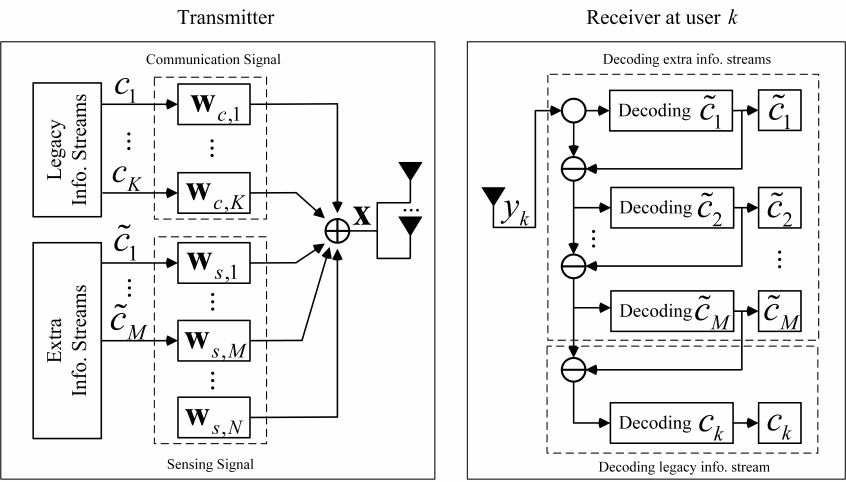}
   \caption{Block diagram of the transmitter and the receiver at user $k$ in the NOMA-inspired ISAC framework.}
   \label{fig:NOMA-inspired ISAC}
   \vspace{-0.5cm}
\end{figure}

Given the transmit signal in \eqref{eqn:transmis_signal} and \eqref{eqn:sensing_multicast}, the received signal at user $k$ is given by
\begin{align} \label{eqn:received}
  y_k(t) = & \mathbf{h}_k^H \mathbf{W}_c \mathbf{c}(t) + \mathbf{h}_k^H \mathbf{W}_s \tilde{\mathbf{c}}(t) + \mathbf{h}_k^H \tilde{\mathbf{s}}(t) + n_k(t) \nonumber \\
  = &\mathbf{h}_k^H \mathbf{w}_{c,k} c_k(t) + \underbrace{\sum_{j \in \mathcal{K}\setminus k}\mathbf{h}_k^H \mathbf{w}_{c,j} c_j(t)}_{\text{inter-user interference}} + \underbrace{\sum_{i \in \mathcal{M}}\mathbf{h}_k^H \mathbf{w}_{s,i} \tilde{c}_i(t) + \mathbf{h}_k^H \tilde{\mathbf{s}}(t)}_{\text{sensing signal}} + n_k(t),
\end{align} 
where $\mathbf{h}_k \in \mathbb{C}^{N \times 1}$ denotes the BS-user channel and $n_k(t) \sim \mathcal{CN}(0, \sigma_n^2)$ denotes the CSCG noise with variance $\sigma_n^2$. The decoding procedure at user $k$ is shown in Fig. \ref{fig:NOMA-inspired ISAC}. In particular, the information-bearing sensing signals $\mathbf{w}_{s,i} \tilde{c}_i(t), \forall i \in \mathcal{M}$ are successively decoded via SIC firstly. Thus, the achievable rate of the $i$-th stream $\tilde{c}_i(t)$ embedded into the sensing signal at user $k$ is given by
\begin{equation} \label{eqn:multicast_rate_1}
  \tilde{R}_{i,k}^{N} = \log_2 \left( 1 + \frac{|\mathbf{h}_k^H \mathbf{w}_{s,i}|^2}{ \sum_{q \in \mathcal{M}, q > i}|\mathbf{h}_k^H \mathbf{w}_{s,q}|^2 + \sum_{j \in \mathcal{K}} |\mathbf{h}_k^H \mathbf{w}_{c,j}|^2 + \mathbf{h}_k^H \mathbf{R}_{\tilde{\mathbf{s}}} \mathbf{h}_k + \sigma_n^2} \right).
\end{equation} 
To successfully eliminated the information-bearing sensing signals at all the $K$ communication users by SIC, the effective achievable rate of the $i$-th streams $\tilde{c}_i(t)$ needs to be limited by the user achieving the lowest rate, which is given by
\begin{equation} \label{eqn:multicast_rate_2}
    \tilde{R}_i^{N} = \min \{R_{i,1}^{N}, R_{i,2}^{N}, \dots, R_{i,K}^{N} \}.
\end{equation} 
After decoding and removing all streams $\tilde{c}_i(t), \forall i \in \mathcal{M}$ via SIC, there is no interference from information-bearing sensing signals to the existing legacy communication stream $\mathbf{w}_{c,k}c_k(t)$ for user $k$. Thus, the achievable rate of the stream $c_k(t)$ at user $k$ is given by
\begin{equation} \label{eqn:unicast_rate}
    R_k^{N} = \log_2 \left( 1 + \frac{| \mathbf{h}_k^H \mathbf{w}_{c,k}|^2}{ \sum_{j \in \mathcal{K} \setminus k} |\mathbf{h}_k^H \mathbf{w}_j|^2 +  \mathbf{h}_k^H \mathbf{R}_{\tilde{\mathbf{s}}} \mathbf{h}_k + \sigma_n^2} \right),
\end{equation}
The communication throughput of the proposed NOMA-inspired ISAC system is the sum rate of the extra information streams delivered by the sensing signal and the existing legacy information streams, which is given by 
\begin{equation} \label{eqn:throughput_I}
    R^{N} = \sum_{i \in \mathcal{M}} \tilde{R}_i^{N} + \sum_{k \in \mathcal{K}} R_k^{N}.
\end{equation}

\subsubsection{Sensing Model}
In the ISAC system, both communication and sensing signals can be exploited for target sensing, since the BS has complete knowledge of the transmit signals. In this case, there is no communication-to-sensing interference. Assuming planar wave and line-of-sight (LoS) propagation, the signal in the direction of $\theta$ can be modeled as 
\begin{equation} \label{eqn:y_s}
  y_s(\theta, t) = \mathbf{a}^H(\theta) \mathbf{x},
\end{equation} 
$\mathbf{a}(\theta_l) = [1, e^{j\frac{2\pi}{\lambda}d\sin({\theta_l})},...,e^{j\frac{2\pi}{\lambda}d(N-1)\sin({\theta_l})}]^T$ is the steering vector of direction $\theta_l$ when the uniform linear array (ULA) is equipped at the BS. Then, the echo signal reflected by the sensing target in the direction of $\theta$ is given by 
\begin{equation}
  \mathbf{y}_s(\theta, t) = \beta \mathbf{b}(\theta) \mathbf{a}^H(\theta) \mathbf{x} +  \mathbf{n}_s,
\end{equation}
where $\beta$ denotes the complex reflecting coefficient of the target plus the round-trip path loss, $\mathbf{b}(\theta)$ denotes the steering vector of the receiver at the BS, and $\mathbf{n}_s$ denotes the complex Gaussian noise. According to \cite{stoica2007probing}, the transmit signal needs to be carefully designed for the desired sensing performance. Typically, the transmit signal is optimized to approximate the desired sensing beampattern. Based on \eqref{eqn:y_s}, the transmit beampattern achieved by the signal $\mathbf{x}$ in the direction of $\theta$ is given by 
\begin{equation} \label{eqn:transmit_beampattern}
  P(\theta, \mathbf{R}_\mathbf{x}) = \mathbb{E}[|y_s(\theta, t)|^2] = \mathbf{a}^H(\theta) \mathbf{R}_{\mathbf{x}} \mathbf{a}(\theta),
\end{equation} 
where $\mathbf{R}_{\mathbf{x}}$ denotes transmit covariance matrix:
\begin{align} \label{eqn:transmit_covariance}
  \mathbf{R}_{\mathbf{x}} &=\mathbb{E}[\mathbf{x}\mathbf{x}^H]= \sum_{k \in \mathcal{K}} \mathbf{w}_{c,k} \mathbf{w}_{c,k}^H + \mathbf{R}_{\mathbf{s}} \nonumber\\
  &= \sum_{k \in \mathcal{K}} \mathbf{w}_{c,k} \mathbf{w}_{c,k}^H + \sum_{i \in \mathcal{M}} \mathbf{w}_{s,i} \mathbf{w}_{s,i}^H + \mathbf{R}_{\tilde{\mathbf{s}}}.
\end{align}

In practice, the desired sensing beampattern is designed according to the sensing requirements \cite{stoica2007probing}. For example, if the sensing system
has no information about the target and works in the detecting mode, an isotropic beampattern is desired, i.e., the power is uniformly distributed among all directions.
However, when the sensing system has prior information of targets and works in the tracking mode, the beampattern is expected to have the dominant peak power in the target directions.
Let $\{\phi(\theta_l)\}_{l=1}^L$ denotes a pre-designed sensing beampattern over an angular grid $\{\theta_l\}_{l=1}^L$ covering the detector's angular range $[-\frac{\pi}{2}, \frac{\pi}{2}]$. Then, the sensing performance can be evaluated by the following loss function \cite{stoica2007probing}:
\begin{equation}
  L(\delta, \mathbf{R}_\mathbf{x}) = \frac{1}{L} \sum_{l=1}^{L}\left| \delta \phi(\theta_l) - \mathbf{a}^H(\theta_l) \mathbf{R}_{{\bf x}} \mathbf{a}(\theta_l)  \right|^2,
\end{equation}
where $\delta$ is a scaling factor. It is worth noting that minimizing $L(\delta, \mathbf{R}_\mathbf{x})$ is in conflict with maximizing the communication throughput $R^N$, especially when sensing works in the tracking mode. The reason can be explained as follows. Maximizing the communication throughput requires high transmit power in the user-desired directions, e.g., the directions of scatters and reflectors in the BS-user channels. However, minimizing $L(\delta, \mathbf{R}_\mathbf{x})$ will force the BS to transmit high power in the target directions and low power in the non-target directions. Generally, due to the randomness of the BS-user channel, the user-desired directions have a high probability to fall into the non-target directions. In this case, the tailored transmit beamforming design is required to facilitate both communication and sensing simultaneously.

\begin{remark}
  \emph{The main benefits of the proposed NOMA-inspired ISAC framework can be summarized as follows. Firstly, by embedding the information into part of the sensing signal, the sensing interference on the existing legacy communication signals can be eliminated via SIC, see \eqref{eqn:received} and \eqref{eqn:unicast_rate}.
  Secondly, the sensing signal is further exploited for communication by delivering extra information streams, see \eqref{eqn:sensing_multicast} and \eqref{eqn:throughput_I}. Thirdly, the sensing signal provides more DoFs to achieve high quality sensing beampattern, see \eqref{eqn:transmit_beampattern} and \eqref{eqn:transmit_covariance}. Last but not least, the proposed NOMA-inspired ISAC framework facilities the \emph{double benefits} of the sensing signal, namely enhancing both communication and sensing.}
\end{remark}

\subsection{Problem Formulation}
Given the proposed NOMA-inspired ISAC framework, it is shown in the previous subsection that there are two conflicting objectives, namely maximizing the communication throughput and minimizing the beampattern matching error. For maximizing the communication throughput, the related single-objective optimization problem (SOOP) subject to the total transmit power is formulated as follows:

\emph{Problem 1 (Throughput Maximization)}:
\begin{subequations}
  \begin{align}
    \text{(P1):} \quad \max_{\mathbf{W}_c, \mathbf{W}_s, \mathbf{R}_{\tilde{\mathbf{s}}} } \quad &\sum_{i \in \mathcal{M}} \tilde{R}_i^{N} + \sum_{k \in \mathcal{K}} R_k^{N} \\
    \mathrm{s.t.} \quad & \mathrm{tr}(\mathbf{W}_c \mathbf{W}_c^H + \mathbf{W}_s \mathbf{W}_s^H + \mathbf{R}_{\tilde{\mathbf{s}}}) \le P_t,\\
    & \mathbf{R}_{\tilde{\mathbf{s}}} \succeq 0.
  \end{align}
\end{subequations}
where $P_t$ denotes the transmit power budget of the BS.
Unlike the communication functionality, the sensing functionality requires the equality power constraint for achieving optimal sensing performance. Thus, the SOOP for the beampattern matching error minimization can be formulated as follows:

\emph{Problem 2 (Beampattern Matching Error Minimization)}:
\begin{subequations}
  \begin{align}
    \text{(P2):} \quad \min_{\delta, \mathbf{W}_c, \mathbf{W}_s, \mathbf{R}_{\tilde{\mathbf{s}}} } \quad & L(\delta, \mathbf{R}_\mathbf{x}) = \frac{1}{L} \sum_{l=1}^{L}\left| \delta \phi(\theta_l) - \mathbf{a}^H(\theta_l) \mathbf{R}_{{\bf x}} \mathbf{a}(\theta_l)  \right|^2 \\
    \mathrm{s.t.} \quad & \mathrm{tr}(\mathbf{W}_c \mathbf{W}_c^H + \mathbf{W}_s \mathbf{W}_s^H + \mathbf{R}_{\tilde{\mathbf{s}}}) = P_t,\\
    & \mathbf{R}_{\tilde{\mathbf{s}}} \succeq 0.
  \end{align}
\end{subequations}
In the problem (P2), if the power constraint is not forced to achieve equality, we can always obtain the optimal solution $\delta=0$ and $\mathbf{R}_{\mathbf{x}} = \mathbf{0}$, such that $L(\delta, \mathbf{R}_\mathbf{x})$ is minimized. Apparently, it is the undesired solution that needs to be avoided. Then, based on (P1) and (P2), a multiple-objective optimization problem (MOOP) is formulated for the purpose of investigating the trade-off between the two conflicting objectives, which is given by 

\emph{Problem 3 (Multiple Objective Optimization)}:
\begin{subequations}
  \begin{align}
    \text{(P3):} \quad \text{Q1}: \max_{\mathbf{W}_c, \mathbf{W}_s, \mathbf{R}_{\tilde{\mathbf{s}}} } \quad &\sum_{i \in \mathcal{M}} \tilde{R}_i^{N} + \sum_{k \in \mathcal{K}} R_k^{N} \\
    \text{Q2}: \min_{\delta, \mathbf{W}_c, \mathbf{W}_s, \mathbf{R}_{\tilde{\mathbf{s}}} } \quad & \frac{1}{L} \sum_{l=1}^{L}\left| \delta \phi(\theta_l) - \mathbf{a}^H(\theta_l) \mathbf{R}_{{\bf x}} \mathbf{a}(\theta_l)  \right|^2 \\
    \label{c:transmit_power}
    \mathrm{s.t.} \quad & \mathrm{tr}(\mathbf{W}_c \mathbf{W}_c^H + \mathbf{W}_s \mathbf{W}_s^H + \mathbf{R}_{\tilde{\mathbf{s}}}) = P_t,\\
    \label{c:semidefinite_sensing_signal}
    & \mathbf{R}_{\tilde{\mathbf{s}}} \succeq 0.
  \end{align}
\end{subequations}

\section{Proposed Solutions and Property Analysis} \label{sec:solution}
In this section, we first propose a double-layer FP-SCA-based BCD algorithm to solve the problem (P3) of two conflicting objectives in the general multiple-user scenario. To be specific, we first transform the problem (P3) to a SOOP by adopting the $\epsilon$-constraint method \cite{miettinen2012nonlinear}, where Q1 is treated as the main objective while Q2 is transformed into the constraint set. Then, the resultant SOOP is decomposed into three disjoint blocks via the FP methods \cite{shen2018fractional1, shen2018fractional2} and solved by exploiting BCD and SCA \cite{sun2016majorization} to obtain a suboptimal solution. Moreover, for the spacial single-user scenario, we obtain a globally optimal solution by treating Q2 as the main objective and transforming Q1 into the constraint. The resultant problem is a convex quadratic semidefinite programming (QSDP) problem and can be optimally solved. Finally, the property of the proposed framework is analyzed by comparing it with the state-of-the-art ISAC frameworks.

\subsection{Suboptimal Solution to (P3) for the General Multiple-User Scenario} \label{sec:multiple}
To address the conflicting objectives in (P3), 
the $\epsilon$-constraint method is exploited, based on which (P3) is transformed into a SOOP as follows: 
\begin{subequations}
    \begin{align}
      \label{c:obj_1}
      \text{(P3.1):} &\max_{\delta, \mathbf{W}_c, \mathbf{W}_s, \mathbf{R}_{\tilde{\mathbf{s}}} } \quad \sum_{i \in \mathcal{M}} \tilde{R}_i^{N} + \sum_{k \in \mathcal{K}} R_k^{N} \\
      \mathrm{s.t.} \quad & \frac{1}{L} \sum_{l=1}^{L}\left| \delta \phi(\theta_l) - \mathbf{a}^H(\theta_l) \mathbf{R}_{{\bf x}} \mathbf{a}(\theta_l)  \right|^2 \le \epsilon_1, \\ 
      &\eqref{c:transmit_power}, \eqref{c:semidefinite_sensing_signal}.
    \end{align}
\end{subequations}
It is worth noting that the whole Pareto frontier of the two objectives can be obtained by varying the value of $\epsilon_1$ and solving the corresponding optimization problem \cite{miettinen2012nonlinear}.
However, it is general challenging to find the globally optimal solution to (P3.1) since the optimization variables $\mathbf{W}_c$, $\mathbf{W}_s$, and $\mathbf{R}_{\tilde{\mathbf{s}}}$ are highly coupled in the non-convex objective function \eqref{c:obj_1} and quadratic equality constraint \eqref{c:transmit_power}.
As a remedy, we propose a double-layer FP-SCA-based BCD algorithm to obtain a suboptimal solution. To start with, we introduce the auxiliary variables $\mathbf{W}_{c,k} = \mathbf{w}_{c,k} \mathbf{w}_{c,k}^H, \forall k \in \mathcal{K}$, where $\mathbf{W}_{c,k} \succeq 0$ and $\mathrm{rank}(\mathbf{W}_{c,k})=1$, and $\mathbf{W}_{s,i} = \mathbf{w}_{s,i} \mathbf{w}_{s,i}^H, \forall i \in \mathcal{M}$, where $\mathbf{W}_{s,i} \succeq 0$ and $\mathrm{rank}(\mathbf{W}_{s,i})=1$. 
Then, the transmit covariance matrix in \eqref{eqn:transmit_covariance} becomes
\begin{equation}
  \mathbf{R}_\mathbf{x} = \sum_{k \in \mathcal{K}} \mathbf{W}_{c,k} + \sum_{i \in \mathcal{M}} \mathbf{W}_{s,i} + \mathbf{R}_{\tilde{\mathbf{s}}}.
\end{equation}
Moreover, the achievable rate of the extra information streams and the existing communication stream at user $k$ can be rewritten as
\begin{align}
  & \tilde{R}_{i,k}^{N} = \log_2 \Big( 1 + \frac{\mathbf{h}_k^H \mathbf{W}_{s,i} \mathbf{h}_k }{ \sum_{q \in \mathcal{M}, q > i} \mathbf{h}_k^H \mathbf{W}_{s,q} \mathbf{h}_k + \sum_{j \in \mathcal{K}} \mathbf{h}_k^H \mathbf{W}_{c,j} \mathbf{h}_k + \mathbf{h}_k^H \mathbf{R}_{\tilde{\mathbf{s}}} \mathbf{h}_k + \sigma_n^2 } \Big),\\
  & R_k^{N} = \log_2 \Big( 1 + \frac{\mathbf{h}_k^H \mathbf{W}_{c,k} \mathbf{h}_k }{ \sum_{j \in \mathcal{K} \setminus k} \mathbf{h}_k^H \mathbf{W}_{c,j} \mathbf{h}_k + \mathbf{h}_k^H \mathbf{R}_{\tilde{\mathbf{s}}} \mathbf{h}_k + \sigma_n^2} \Big).
\end{align}
Then, in order to remove the $\min\{\cdot\}$ function in the extra rate $\tilde{R}_i^{N}$, we transform (P3.1) to an equivalent problem, which is given by 
\begin{subequations}
    \begin{align}
      \text{(P3.2):} & \max_{ \mathbf{\Theta} } \quad f_0(\mathbf{\Theta}) = \sum_{i \in \mathcal{M}} \tilde{r}_i + \sum_{k \in \mathcal{K}} R_k^{N} \\
      \label{c:min_rate}
      \mathrm{s.t.} \quad & \tilde{R}_{i,k}^{N} \ge \tilde{r}_i \ge 0, \forall i \in \mathcal{M}, \forall k \in \mathcal{K},\\
      \label{c:e_constraint}
      & \frac{1}{L} \sum_{l=1}^{L}\left| \delta \phi(\theta_l) - \mathbf{a}^H(\theta_l) \mathbf{R}_{{\bf x}} \mathbf{a}(\theta_l)  \right|^2 \le \epsilon_1,\\
      & \sum_{k \in \mathcal{K}} \mathrm{tr}(\mathbf{W}_{c,k}) + \sum_{i \in \mathcal{M}} \mathrm{tr}(\mathbf{W}_{s,i}) + \mathrm{tr}(\mathbf{R}_{\tilde{\mathbf{s}}}) = P_t,\\
      \label{c:rank_1}
      &\mathbf{W}_{s,i} \succeq 0, \mathrm{rank}(\mathbf{W}_{s,i}) = 1, \forall i \in \mathcal{M},\\
      \label{c:rank_1_2}
      &\mathbf{W}_{c,k} \succeq 0, \mathrm{rank}(\mathbf{W}_{c,k}) = 1, \forall k \in \mathcal{K},\\
      \label{c:semidefinite}
      &\mathbf{R}_{\tilde{\mathbf{s}}} \succeq 0,
    \end{align}
\end{subequations}
where $\tilde{r}_i$ is the auxiliary variable and $\mathbf{\Theta} = (\delta, \{\tilde{r}_i\}, \{\mathbf{W}_{c,k}\}, \{\mathbf{W}_{s,i}\}, \mathbf{R}_{\tilde{\mathbf{s}}})$ denotes all the optimization variables. Next, we reformulated problem (P3.2) by invoking the FP approach proposed in \cite{shen2018fractional1} and \cite{shen2018fractional2}. 
Based on FP, (P3.2) can be decomposed into three disjoint blocks and solved iteratively by BCD. Specifically, in BCD, two blocks can be updated with the closed-form expressions and the remaining one can be updated via SCA \cite{sun2016majorization}.

\subsubsection{FP approach for problem reformulation} \label{sec:FP}
Let $\gamma_k$ denote the SINR term in $R_k^{N}$, i.e., $R_k^{N} = \log_2 (1 + \gamma_k)$.
Then, by applying the Lagrangian dual transform proposed in \cite{shen2018fractional2} and introducing an auxiliary variable $\boldsymbol{\alpha} = [\alpha_1,\dots,\alpha_K]^T \in \mathbb{R}^{K \times 1}$,
$R_k^{N}$ can be transformed to 
\begin{equation}
  r_k^N = \log_2(1 + \alpha_k) - \alpha_k + \frac{(1+\alpha_k) \gamma_k}{1 + \gamma_k}.
\end{equation} 
By replacing $R_k^{N}$ with $r_k^N$ in the problem (P3.2), we obtain a new optimization problem, which is given by 
\begin{subequations} 
  \begin{align}
    \text{(P3.3):} \max_{\boldsymbol{\alpha}, \mathbf{\Theta}} \quad &
    f_1(\boldsymbol{\alpha}, \mathbf{\Theta}) = \sum_{i \in \mathcal{M}} \tilde{r}_i + \sum_{k \in \mathcal{K}} r_k^N \\
    \mathrm{s.t.} \quad & \eqref{c:min_rate} - \eqref{c:rank_1}.
  \end{align}
\end{subequations}

\begin{lemma} \label{lemma_1}
  \emph{
    (P3.2) and (P3.3) are equivalent. The optimal $\boldsymbol{\alpha}$ is given by
    \begin{equation} \label{eqn:opt_alpha}
      \alpha_k^\star = \gamma_k, \forall k \in \mathcal{K}.
    \end{equation}
  }
\end{lemma}

\begin{proof}
  Please refer to \cite[Theorem 3]{shen2018fractional2}.
\end{proof}

Lemma \ref{lemma_1} shows the equivalence between (P3.2) and (P3.3). However, the objective function of (P3.3) is still non-convex due to 
the sum-of-ratio term, which is denoted by 
\begin{align}
  g( \boldsymbol{\alpha}, \mathbf{\Theta}) &= \sum_{k \in \mathcal{K}} \frac{(1+\alpha_k) \gamma_k}{1 + \gamma_k} = \sum_{k \in \mathcal{K}} \frac{(1+\alpha_k) \mathbf{h}_k^H \mathbf{W}_{c,k} \mathbf{h}_k }{ \sum_{j \in \mathcal{K}} \mathbf{h}_k^H \mathbf{W}_{c,j} \mathbf{h}_k + \mathbf{h}_k^H \mathbf{R}_{\tilde{\mathbf{s}}} \mathbf{h}_k + \sigma_n^2 }.
\end{align}
Then, the objective function of (P3.3) can be rewritten as 
\begin{align}
  f_1(\boldsymbol{\alpha}, \mathbf{\Theta}) = \sum_{i \in \mathcal{M}} \tilde{r}_i + v(\boldsymbol{\alpha}) + g(\boldsymbol{\alpha}, \mathbf{\Theta}),
\end{align}
where $v(\boldsymbol{\alpha}) = \sum_{k \in \mathcal{K}}(\log_2(1 + \alpha_k) - \alpha_k))$. In order to solve the non-convexity of $g( \boldsymbol{\alpha}, \mathbf{\Theta})$, 
we transform it into a new function via the quadratic transform proposed in \cite{shen2018fractional1}, which is given by 

\begin{align}
  h(\boldsymbol{\alpha}, \boldsymbol{\beta}, \mathbf{\Theta}) = &\sum_{k \in \mathcal{K}} \Big( 2 \mathrm{Re} \big\{ \beta^* \sqrt{(1+\alpha_k)\mathbf{h}_k^H \mathbf{W}_{c,k} \mathbf{h}_k } \big\} \Big) \nonumber \\
  &- \sum_{k \in \mathcal{K}} |\beta_k|^2 \Big( \sum_{j \in \mathcal{K}} \mathbf{h}_k^H \mathbf{W}_{c,j} \mathbf{h}_k + \mathbf{h}_k^H \mathbf{R}_{\tilde{\mathbf{s}}} \mathbf{h}_k + \sigma_n^2 \Big),
\end{align} 
where $\boldsymbol{\beta} = [\beta_1,\dots,\beta_K]^T \in \mathbb{C}^{K \times 1}$ is the auxiliary variable.
Then, a new optimization problem can be obtained from (P3.3) by replacing $g(\boldsymbol{\alpha}, \mathbf{\Theta})$ with $h(\boldsymbol{\alpha}, \boldsymbol{\beta}, \mathbf{\Theta})$:
\begin{subequations} 
  \begin{align}
    \text{(P3.4):} \max_{\boldsymbol{\alpha}, \boldsymbol{\beta}, \mathbf{\Theta}} \quad &
    f_2(\boldsymbol{\alpha}, \boldsymbol{\beta}, \mathbf{\Theta}) \!=\! \sum_{i \in \mathcal{M}} \tilde{r}_i + v(\boldsymbol{\alpha}) + h(\boldsymbol{\alpha}, \boldsymbol{\beta}, \mathbf{\Theta}) \\
    \mathrm{s.t.} \quad & \eqref{c:min_rate} - \eqref{c:rank_1}.
  \end{align}
\end{subequations}

\begin{lemma} \label{lemma_2}
  \emph{
    (P3.3) and (P3.4) are equivalent. The optimal $\boldsymbol{\beta}$ is given by 
  \begin{equation} \label{eqn:opt_beta}
    \beta_k^\star = \frac{\sqrt{(1+\alpha_k)\mathbf{h}_k^H \mathbf{W}_{c,k} \mathbf{h}_k}}{\sum_{j \in \mathcal{K}} \mathbf{h}_k^H \mathbf{W}_{c,j} \mathbf{h}_k + \mathbf{h}_k^H \mathbf{R}_{\tilde{\mathbf{s}}} \mathbf{h}_k + \sigma_n^2}.
  \end{equation}  
  }
\end{lemma}

\begin{proof}
  Please refer to \cite[Theorem 2]{shen2018fractional1}.
\end{proof}
According to Lemma \ref{lemma_1} and Lemma \ref{lemma_2}, (P3.2) is equivalent to (P3.4). It can be observed that the optimization variables of 
the problem (P3.4) can be decomposed into three disjoint blocks $\boldsymbol{\alpha}$, $\boldsymbol{\beta}$, and $\mathbf{\Theta}$.   
Therefore, it can be solved by exploiting BCD through the following procedure:
\begin{enumerate}
  \item Initialize $n=0$ and a feasible $\mathbf{\Theta}^n$;
  \item Update $\boldsymbol{\alpha}^{n+1}$ by \eqref{eqn:opt_alpha} with $\mathbf{\Theta}^n$;
  \item Update $\boldsymbol{\beta}^{n+1}$ by \eqref{eqn:opt_beta} with $\boldsymbol{\alpha}^{n+1}$ and $\mathbf{\Theta}^n$;
  \item Update $\mathbf{\Theta}^{n+1}$ by solving (P3.4) with $\boldsymbol{\alpha}^{n+1}$ and $\boldsymbol{\beta}^{n+1}$.
\end{enumerate}
Here, $\boldsymbol{\alpha}^n$, $\boldsymbol{\beta}^n$, and $\mathbf{\Theta}^n$ denote the obtained optimization variables in the $n$-th iterations of BCD.
However, despite that the objective function of (P3.4) is concave with respect to $\mathbf{\Theta}$ for the fixed $\boldsymbol{\alpha}$ and $\boldsymbol{\beta}$,
the constraints \eqref{c:min_rate}, \eqref{c:rank_1} and \eqref{c:rank_1_2} are non-convex, which can be solved by invoking SCA \cite{sun2016majorization}.

\subsubsection{SCA for updating $\mathbf{\Theta}$} \label{sec:SCA}
To solve the non-convexity of the constraint \eqref{c:min_rate}, we firstly rewrite $\tilde{R}_{i,k}$ as follows:
\begin{align}
  \tilde{R}_{i,k}
  = \log_2 \left( \mathbf{h}_k^H \left(\mathbf{A}_i + \mathbf{W}_{s,i} \right) \mathbf{h}_k + \sigma_n^2 \right) \underbrace{ - \log_2 \left( \mathbf{h}_k^H \mathbf{A}_i \mathbf{h}_k + \sigma_n^2 \right)}_{t_{i,k}},
\end{align}
where $\mathbf{A}_i = \sum_{q \in \mathcal{M}, q > i}\mathbf{W}_{s,q}  + \sum_{j \in \mathcal{K}}  \mathbf{W}_{c,j} +  \mathbf{R}_{\tilde{\mathbf{s}}}$.
It can be observed that the non-convexity lies in the term $t_{i,k}$. Let $\mathbf{A}_i^n$ denote the value of $\mathbf{A}_i$ in the $n$-th iteration of BCD. Then, we adopt a surrogate function $\tilde{t}_{q,k}$ constructed using the first-order Taylor expansion of $t_{i,k}$  at 
point $\mathbf{A}_i^n$, which is given by 
\begin{align}
  \tilde{t}_{q,k} \! \triangleq \! - \log_2 \left( \mathbf{h}_k^H \mathbf{A}_i^n \mathbf{h}_k \!+\! \sigma_n^2 \right) \!-\! \frac{\mathbf{h}_k^H \left( \mathbf{A}_i - \mathbf{A}_i^n \right) \mathbf{h}_k}{\left( \mathbf{h}_k^H \mathbf{A}_i^n \mathbf{h}_k + \sigma_n^2 \right) \ln 2},
\end{align}
As $t_k$ is a convex function of the optimization variables $\mathbf{A}_i$, the inequality $t_{i,k} \ge \tilde{t}_{q,k}$ holds.
Then, $\tilde{R}_{i,k}$ is bounded by 
\begin{equation}
  \tilde{R}_{i,k} \ge \overline{R}_{i,k}^N \triangleq  \log_2 \left( \mathbf{h}_k^H \!\left( \mathbf{A}_i + \mathbf{W}_{s,i} \right) \mathbf{h}_k + \sigma_n^2 \right) + \tilde{t}_{q,k}.
\end{equation} 
Therefore, the constraint \eqref{c:min_rate} can be approximated by $\overline{R}_{i,k}^N \ge \tilde{r}_i \ge 0, \forall i \in \mathcal{M}, \forall k \in \mathcal{K}$, which is convex. 

To solve the non-convex rank-one constraints \eqref{c:rank_1} and \eqref{c:rank_1_2}, the semidefinite relaxation (SDR) \cite{luo2010semidefinite} is generally applied by omitting the rank-one constraint and transforming the original problem to a convex semidefinite program (SDP). In this case, the global optimum of the resultant SDR optimization problem can be efficiently obtained by the convex solvers like CVX \cite{cvx}. Nevertheless, as the rank-one property of the matrices $\mathbf{W}_{s,i}, \forall i \in \mathcal{M}$ and $\mathbf{W}_{c,k}, \forall k \in \mathcal{K}$ is not gauranteed in SDR, the rank-one solutions can be reconstructed via the eigenvalue decomposition or the Gaussian randomization. However, these methods do not ensure the feasibility of the reconstructed rank-one solution and may lead to performance loss. Therefore, we propose to solve the rank-one constraint by a penalty-based scheme \cite{mu2021noma}, where the rank-one constraints are transformed into a penalty term in the objective function. Firstly, the rank-one constraint can be equivalently transformed into
\begin{subequations}
  \begin{align}
    &\| \mathbf{W}_{s,i} \|_* - \| \mathbf{W}_{s,i} \|_2 = 0, \forall i \in \mathcal{M},\\
    &\| \mathbf{W}_{c,k} \|_* - \| \mathbf{W}_{c,k} \|_2 = 0, \forall k \in \mathcal{K},
  \end{align}
\end{subequations}
where $\| \cdot \|_*$ and $\| \cdot \|_2$ denote the nuclear norm and spectral norm, respectively. 
The nuclear norm is the sum of all the eigenvalues, while the spectral norm equals the largest eigenvalue. 
Thus, for the positive semidefinite matrices $\mathbf{W}_{s,i}$ and $\mathbf{W}_{c,k}$, 
it holds that $\| \mathbf{W}_{s,i} \|_* - \| \mathbf{W}_{s,i} \|_2 \ge 0$ and $\| \mathbf{W}_{c,k} \|_* - \| \mathbf{W}_{c,k} \|_2 \ge 0$.   
Then, the nearly rank-one matrices can be obtained by minimizing the difference between the nuclear norm and spectral norm.
Based on this observation, we introduce a penalty term to the objective function of the problem (P3.4), which is given by 
\begin{align}
  \varUpsilon (\mathbf{W}_{s,i}, \mathbf{W}_{c,k}) = \sum_{i \in \mathcal{M}} \big(\| \mathbf{W}_{s,i} \|_* - \|\mathbf{W}_{s,i}\|_2 \big) + \sum_{k \in \mathcal{K}} \big(\| \mathbf{W}_{c,k} \|_* - \|\mathbf{W}_{c,k}\|_2 \big).
\end{align}
Then, the new optimization problem with the penalty term for updating $\mathbf{\Theta}$ with $\boldsymbol{\alpha}^{n+1}$ and $\boldsymbol{\beta}^{n+1}$ is given as follows:
\begin{subequations} 
  \begin{align}
    \text{(P3.5):} \max_{\mathbf{\Theta}} \quad &
    f_2(\boldsymbol{\alpha}^{n+1}, \boldsymbol{\beta}^{n+1}, \mathbf{\Theta}) - \frac{1}{\zeta} \varUpsilon(\mathbf{W}_{s,i}, \mathbf{W}_{c,k})\\
    \label{c:multi_rate_sca}
    \mathrm{s.t.} \quad & \overline{R}_{i,k}^N \ge \tilde{r}_i \ge 0, \forall k \in \mathcal{K},\\
    & \eqref{c:e_constraint} - \eqref{c:semidefinite},
  \end{align}
\end{subequations}
where $\zeta$ is the regularization parameter. However, problem (P3.5) is still non-convex due to the terms  $-\| \mathbf{W}_{s,i} \|_2$ and $-\| \mathbf{W}_{c,k} \|_2$ in the penalty term $\varUpsilon(\mathbf{W}_{s,i}, \mathbf{W}_{c,k})$. To solve it, we adopt the surrogate function obtained by the first-order Taylor expansion. Given the point $\mathbf{W}_{s,i}^n$ in the $n$-th iteration of BCD, the surrogate function $\tilde{\mathbf{W}}_{r,q}^n$ is given by 
\begin{align} \label{eqn:spectral_sca}
  \tilde{\mathbf{W}}_{s,i}^n \triangleq -\| \mathbf{W}_{s,i}^n \|_2 - \mathrm{tr} \big[ \mathbf{u}_{\max,i}^n (\mathbf{u}_{\max,i}^n)^H \left( \mathbf{W}_{s,i} - \mathbf{W}_{s,i}^n \right) \big], 
\end{align}
where $\mathbf{u}_{\max,i}^n$ denotes the eigenvector corresponding to the largest eigenvalue of the matrix $\mathbf{W}_{s,i}^n$. 
As $-\| \mathbf{W}_{s,i} \|_2$ is a concave function of the optimization variable $\mathbf{W}_{s,i}$, it holds that $-\| \mathbf{W}_{s,i} \|_2 \le \tilde{\mathbf{W}}_{s,i}^n$. 
The surrogate function $\tilde{\mathbf{W}}_{c,k}^n$ of $-\|\mathbf{W}_{c,k}\|_2$ is defined similarly to $\tilde{\mathbf{W}}_{r,q}^n$. Thus, the penalty term can be approximated by 
\begin{align}
  \tilde{\varUpsilon}(\mathbf{W}_{s,i}, \mathbf{W}_{c,k}) = &\sum_{i \in \mathcal{M}} \big(\| \mathbf{W}_{s,i} \|_* + \tilde{\mathbf{W}}_{s,i}^n\big) + \sum_{k \in \mathcal{K}} \big(\| \mathbf{W}_{c,k} \|_* + \tilde{\mathbf{W}}_{c,k}^n \big),
\end{align}
which yields the following optimization problem:
\begin{subequations} 
  \begin{align}
    \text{(P3.6):} \max_{\mathbf{\Theta}} \quad &
    f_2(\boldsymbol{\alpha}^{n+1}, \boldsymbol{\beta}^{n+1}, \mathbf{\Theta}) - \frac{1}{\zeta} \tilde{\varUpsilon}(\mathbf{W}_{s,i}, \mathbf{W}_{c,k})\\
    \mathrm{s.t.} \quad & \eqref{c:multi_rate_sca}, \eqref{c:e_constraint} - \eqref{c:semidefinite},
  \end{align}
\end{subequations}
which is a convex QSDP with respect to $\mathbf{\Theta}$ and can be solved by convex solvers such as CVX \cite{cvx}.
It is pointed out in \cite{razaviyayn2013unified} that the BCD method can converge to a stationary point though the SCA is exploited. 

It is worth noting that the value of the regularization parameter $\zeta$ has a significant influence on the obtained solutions. 
Specifically, in order to ensure the nearly rank-one solutions, the value of $\zeta$ should be sufficient small, i.e., $\zeta \rightarrow 0$ and $1/\zeta \rightarrow +\infty$.
However, the objective function is dominated by the penalty term in this case, thereby leading to the insufficiency of maximizing the communication throughput.
Therefore, we firstly set $\zeta$ to a large value for obtaining a good start point with respect to the communication throughput. Then, the value of $\zeta$
is gradually reduced to a sufficiently small value for obtaining a rank-one solution to $\mathbf{W}_{s,i}, \forall i \in \mathcal{M}$ and $\mathbf{W}_{c,k}, \forall k \in \mathcal{K}$.
To this end, we introduce an outer layer nested with the inner BCD layer, where the parameter $\zeta$ is updated following
\begin{equation}
  \zeta = \rho \zeta, 0 < \rho < 1.
\end{equation} 

Based on the approaches proposed above, problem (P3) can be solved in double-layer iterations.
In the inner layer, problem (P3) is transformed to (P3.6) and solved through BCD and SCA with the fixed $\zeta$. In the outer layer, the value of $\zeta$    
is gradually decreased to ensure a rank-one solution. The inner layer is terminated when the communication throughput of the system converges, i.e.,
\begin{equation}
  | (R^N)^n - (R^N)^{n-1}| \le \tau_1,
\end{equation}
where $(R^N)^n$ and $(R^N)^{n-1}$ are the throughputs in the $n$-th and $(n-1)$-th inner iterations, respectively, while the outer layer is terminated when the penalty term is smaller than a pre-defined threshold, i.e.,
\begin{equation} \label{eqn:rank_difference}
  \varUpsilon(\mathbf{W}_{s,i}, \mathbf{W}_{c,k}) \le \tau_2.
\end{equation} 
\begin{algorithm}[htb]
  \caption{Proposed double-layer FP-SCA-based BCD algorithm for solving the problem (P3).}
  \label{alg:A}
  \begin{algorithmic}[1]
        \STATE{Choose a non-zero value of $\epsilon_1$.}
        \STATE{Initialize the feasible $\mathbf{\Theta}^0$ and the parameter $\zeta$.}
        \STATE{Calculate the communication throughput $(R^N)^0$ based on the given $\mathbf{\Theta}^0$.}
        \REPEAT
        \STATE{Set iteration index $n=0$ for inner layer.}
        \REPEAT
            \STATE{Update $\boldsymbol{\alpha}^{n+1}$ by \eqref{eqn:opt_alpha} with $\mathbf{\Theta}^n$. }
            \STATE{Update $\boldsymbol{\beta}^{n+1}$ by \eqref{eqn:opt_beta} with $\boldsymbol{\alpha}^{n+1}$ and $\mathbf{\Theta}^n$. }
            \STATE{Update $\mathbf{\Theta}^{n+1}$ by solving (P3.6) with $\boldsymbol{\alpha}^{n+1}$, $\boldsymbol{\beta}^{n+1}$, and $\mathbf{\Theta}^n$.}
            \STATE{$n = n+1$.}
        \UNTIL{$| (R^N)^{n} - (R^N)^{n-1}| \le \tau_1$.}
        \STATE{$\mathbf{\Theta}^0 = \mathbf{\Theta}^n$, $\zeta = \rho \zeta$.}
        \UNTIL{$\varUpsilon(\mathbf{W}_{s,i}, \mathbf{W}_{c,k}) \le \tau_2$.}
    \end{algorithmic}
\end{algorithm}
\noindent The detailed algorithm for solving the problem (P3) is 
summarized in Algorithm \ref{alg:A}.
The computational burden of this algorithm mainly arise from updating $\boldsymbol{\alpha}$, updating $\boldsymbol{\beta}$ and solving 
the QSDP (P3.6), the computational complexity of which are $\mathcal{O}(KN^2)$, $\mathcal{O}(KN^2)$, and $\mathcal{O}( (K+Q+1)^{6.5} N^{6.5}\log(1/e) )$, respectively. 
The parameter $e$ denotes the solution accuracy for the QSDP using the primal-dual interior-point algorithm.
Thus, for $I_i$ inner iterations and $I_o$ outer iterations, the overall complexity of Algorithm \ref{alg:A} is $\mathcal{O}( I_o I_i ((K+Q+1)^{6.5} N^{6.5}\log(1/e) + 2KN^2) )$.

\subsection{Globally Optimal Solution to (P3) for the Special Single-User Scenario} \label{sec:single}
The single-user scenario is a special case of the general multiple-user scenario with $K=1$. Thus, the related MOOP can also be suboptimally solved using Algorithm \ref{alg:A}. 
However, in this subsection, we obtain the globally optimal solution to the MOOP in the single-user scenario by equivalently transforming it to a convex QSDP. 
Firstly, in the single-user scenario, the communication throughput $R^N$ in \eqref{eqn:throughput_I} is reduced to 
\begin{align} \label{eqn:rate_single}
  R^N =& \sum_{i \in \mathcal{M}} \log_2 \left( 1 +  \frac {\mathbf{h}^H \mathbf{W}_{s,i} \mathbf{h}} { \mathbf{h}^H \mathbf{B}_q \mathbf{h} + \mathbf{h}^H \mathbf{W} \mathbf{h} + \mathbf{h}^H \mathbf{R}_{\tilde{\mathbf{s}}} \mathbf{h} + \sigma_n^2 } \right) + \log_2 \left( 1 + \frac{\mathbf{h}^H \mathbf{W} \mathbf{h}}{ \mathbf{h}^H \mathbf{R}_{\tilde{\mathbf{s}}} \mathbf{h} + \sigma_n^2 } \right) \nonumber\\
  \overset{(a)}{=} &\log_2 \left( 1 + \frac{ \mathbf{h}^H ( \sum_{i \in \mathcal{M}} \mathbf{W}_{s,i} + \mathbf{W}) \mathbf{h} }{ \mathbf{h}^H \mathbf{R}_{\tilde{\mathbf{s}}} \mathbf{h} + \sigma_n^2 } \right) = \log_2 \left( 1 + \frac{\mathbf{h}^H (\mathbf{R}_{\mathbf{x}} - \mathbf{R}_{\tilde{\mathbf{s}}}) \mathbf{h}}{ \mathbf{h}^H \mathbf{R}_{\tilde{\mathbf{s}}} \mathbf{h} + \sigma_n^2 } \right),
\end{align}
where $\mathbf{h} = \mathbf{h}_1$, $\mathbf{B}_i = \sum_{q \in \mathcal{M}, q > i} \mathbf{W}_{s,q}$, and $\mathbf{W} = \mathbf{W}_{c,1}$. The equality $(a)$ is because of the fact that $\log_2(1 + \frac{A}{B+C}) + \log_2(1 + \frac{B}{C}) = \log_2(1 + \frac{A+B}{C})$. It can be observed from \eqref{eqn:rate_single} that if one more beam of the sensing signal is exploited for transmitting information, the value of $\mathbf{h}^H \mathbf{R}_{\tilde{\mathbf{s}}} \mathbf{h}$ will be decreased, i.e., less interference power. Meanwhile, the value of $\mathbf{h}^H (\mathbf{R}_{\mathbf{x}} - \mathbf{R}_{\tilde{\mathbf{s}}}) \mathbf{h}$ will be increased accordingly, i.e., more effective communication power. Therefore, through the proposed NOMA-inspired scheme, we can transform all the interference power into effective communication power to obtain the maximum communication throughput when all the beams of the sensing signal are exploited to transmit information. In this case, the communication throughput $R^N$ becomes $R^N= \log_2 \left( 1 + \frac{\mathbf{h}^H \mathbf{R}_{\mathbf{x}} \mathbf{h}}{\sigma_n^2 } \right)$. The related MOOP is given by 
\begin{subequations}
    \begin{align}
        \text{(P4):} \text{Q1}: \max_{\mathbf{R}_{\mathbf{x}}} \quad & \log_2\left( 1 + \frac{\mathbf{h}^H \mathbf{R}_{\mathbf{x}} \mathbf{h}}{\sigma_n^2 } \right), \\
        \text{Q2}: \min_{\delta, \mathbf{R}_{\mathbf{x}} } \quad & \frac{1}{L} \sum_{l=1}^{L}\left| \delta \phi(\theta_l) - \mathbf{a}^H(\theta_l) \mathbf{R}_{{\bf x}} \mathbf{a}(\theta_l)  \right|^2 \\
        \label{c:transmit_power_single}
        \mathrm{s.t.} \quad & \mathrm{tr}(\mathbf{R}_{\mathbf{x}}) = P_t, \\
        \label{c:semidefinite_single}
        & \mathbf{R}_{\mathbf{x}} \succeq 0.
    \end{align}
\end{subequations}
Then, by exploiting the $\epsilon$-constraint method, the resultant SOOP for a specific value of $\epsilon_2$ is given by 
\begin{subequations}
    \begin{align}
        \text{(P4.1):} \min_{\delta, \mathbf{R}_{\mathbf{x}} } \quad & \frac{1}{L} \sum_{l=1}^{L}\left| \delta \phi(\theta_l) - \mathbf{a}^H(\theta_l) \mathbf{R}_{{\bf x}} \mathbf{a}(\theta_l)  \right|^2 \\
        \mathrm{s.t.} \quad & \mathbf{h}^H \mathbf{R}_{\mathbf{x}} \mathbf{h} - (2^{\epsilon_2}-1)  \sigma_n^2 \ge 0,\\
        & \eqref{c:transmit_power_single}, \eqref{c:semidefinite_single}.
    \end{align}
\end{subequations}
It is clear that problem (P4.1) is a convex QSDP and can be optimally solved by the convex solvers like CVX \cite{cvx}. When the primal-dual interior-point algorithm
is exploited, (P4.1) can be solved with the worst complexity of $\mathcal{O}(N^{6.5} \log(1/e))$, where $e$ denotes the solution accuracy. 

\subsection{Performance Comparison with the State-of-the-Art ISAC Frameworks}
In this subsection, we compare the proposed NOMA-inspired ISAC framework with the state-of-the-art ISAC frameworks, based on which the property 
of the proposed framework is analyzed.

\subsubsection{State-of-the-Art ISAC Frameworks}
In the state-of-the-art ISAC frameworks, the sensing signal is regarded as a harmful interference to communication.
Thus, the received signal at user $k$ is given by 
\begin{align} 
  y_k(t) = \mathbf{h}_k^H \mathbf{w}_{c,k} c_k(t) + \underbrace{\mathbf{h}_k^H \sum_{j \in \mathcal{K} \setminus k} \mathbf{w}_{c,j} c_j(t)}_{\text{inter-user interference}} + \underbrace{\mathbf{h}_k^H \mathbf{s}}_{\text{sensing signal}} + n_k.
\end{align} 
The achievable rate of the desired unicast signal at user $k$ is given by 
\begin{equation} \label{eqn:rate_exsiting}
    R_k^{C} = \log_2 \left( 1 + \frac{|\mathbf{h}_k^H \mathbf{w}_{c,k}|^2}{ \sum_{j \in \mathcal{K} \setminus k } |\mathbf{h}_k^H \mathbf{w}_{c,j}|^2 + p  \mathbf{h}_k^H \mathbf{R}_{\mathbf{s}} \mathbf{h}_k + \sigma_n^2} \right),
\end{equation} 
where the parameter $p$ is the indicator of the following two state-of-the-art ISAC frameworks:
\begin{itemize}
  \item \textbf{No sensing interference cancellation (No-SenIC) ISAC} \cite{liu2020beamforming,liu2021cram}: The communication receiver has no capability to mitigate the sensing interference. Thus, the desired unicast signal is decoded subject to interference from the sensing signal.
  In this case, $p=1$. 

  \item \textbf{Ideal sensing interference cancellation (Ideal-SenIC) ISAC} \cite{hua2021optimal}: The communication receiver is assumed to be capable of ideally removing the sensing interference from $y_k$. 
  In this case, $p=0$. 

\end{itemize}

\subsubsection{General Multiple-user Scenario}
Following the same procedure in Section \ref{sec:multiple}, in the multiple-user scenario, the SOOP for the Ideal-SenIC ISAC and No-SenIC ISAC obtained by the $\epsilon$-constraint method is given by  
\begin{subequations}
    \begin{align}
      \text{(P5.1):} \max_{\delta, \mathbf{W}_c, \mathbf{R}_{\mathbf{s}}} \quad & R^{C} = \sum_{k \in \mathcal{K}} R_k^{C} \\
      \mathrm{s.t.} \quad & \frac{1}{L} \sum_{l=1}^{L}\left| \delta \phi(\theta_l) - \mathbf{a}^H(\theta_l) \mathbf{R}_{{\bf x}} \mathbf{a}(\theta_l)  \right|^2 \le \epsilon_1, \\
      & \mathrm{tr}(\mathbf{W}_c \mathbf{W}_c^H + \mathbf{R}_{\mathbf{s}}) = P_t, \\
      & \mathbf{R}_{\mathbf{s}} \succeq 0.
    \end{align}
\end{subequations}
Despite the non-convexity of the problem (P5.1), it can be solved by the proposed \textbf{Algorithm \ref{alg:A}} by removing the extra information streams and considering the interference term in \eqref{eqn:rate_exsiting}. In the following, we compare the NOMA-inspired ISAC frameworks with the Ideal-SenIC and No-SenIC ISAC by analyzing the solution to (P3.1) and (P5.1).

\begin{proposition} \label{proposition_2}
  \emph{Given a specific value of $\epsilon_1$, let $\bar{R}^{N}$, $\bar{R}^{C0}$, and $\bar{R}^{C1}$
  denote the optimal values of problem (P3.1) with $p=0$, and (P5.1) with $p=1$, respectively. Then, the following inequality holds when $M = N$: 
  \begin{equation} \label{eqn:multiple_rate_ieqn}
    \bar{R}^{N} \ge \bar{R}^{C0} \ge \bar{R}^{C1}.
  \end{equation}}
\end{proposition}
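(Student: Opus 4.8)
The plan is to prove the two inequalities in \eqref{eqn:multiple_rate_ieqn} separately, treating the easier right-hand inequality $\bar{R}^{C0}\ge\bar{R}^{C1}$ first and then establishing the left-hand inequality $\bar{R}^{N}\ge\bar{R}^{C0}$ by a feasibility-preserving construction that crucially uses the hypothesis $Q=N$.

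For $\bar{R}^{C0}\ge\bar{R}^{C1}$, I would note that problem (P5.1) with $p=0$ and with $p=1$ share \emph{exactly} the same feasible set, because the indicator $p$ enters only the objective through the rate in \eqref{eqn:rate_exsiting}. For any fixed feasible $(\{\mathbf{w}_k\},\mathbf{R}_{\mathbf{r}})$, the term $p\,\mathbf{h}_k^H\mathbf{R}_{\mathbf{r}}\mathbf{h}_k$ in the denominator is nonnegative since $\mathbf{R}_{\mathbf{r}}\succeq 0$; hence setting $p=0$ only shrinks each denominator and gives $R_{u,k}^{C0}\ge R_{u,k}^{C1}$ pointwise. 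Evaluating the $p=0$ objective at the optimizer of the $p=1$ problem therefore returns a value at least $\bar{R}^{C1}$, which is itself bounded above by $\bar{R}^{C0}$.

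For $\bar{R}^{N}\ge\bar{R}^{C0}$, I would take an optimizer $(\{\mathbf{w}_k^\star\},\mathbf{R}_{\mathbf{r}}^\star)$ of (P5.1) with $p=0$ and build a candidate for (P3.1). Since $\mathrm{rank}(\mathbf{R}_{\mathbf{r}}^\star)\le N=Q$, I eigen-decompose $\mathbf{R}_{\mathbf{r}}^\star=\sum_{q}\mathbf{w}_{r,q}\mathbf{w}_{r,q}^H$ and assign \emph{every} eigenbeam to carry multicast information, so the residual sensing covariance vanishes, $\mathbf{R}_{\widetilde{\mathbf{r}}}=\mathbf{0}$. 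Keeping $\mathbf{w}_k=\mathbf{w}_k^\star$, the resulting transmit covariance $\mathbf{R}_{\mathbf{x}}$ in \eqref{eqn:transmit_covariance} coincides with that of the Ideal-SenIC solution; consequently both the equality power constraint \eqref{c:transmit_power} and the beampattern constraint \eqref{c:e_constraint} hold unchanged, so the candidate is feasible for (P3.1).

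It then remains to compare objective values at this candidate. Because $\mathbf{R}_{\widetilde{\mathbf{r}}}=\mathbf{0}$, the residual-sensing interference satisfies $I_{r,k}=\mathbf{h}_k^H\mathbf{R}_{\widetilde{\mathbf{r}}}\mathbf{h}_k=0$, so the NOMA unicast rate in \eqref{eqn:unicast_rate} reduces exactly to the Ideal-SenIC rate, $R_{u,k}^{N}=R_{u,k}^{C0}$ for every $k$. The NOMA throughput \eqref{eqn:throughput_I} therefore exceeds $R^{C0}$ only by the additional term $\sum_{q\in\mathcal{Q}}R_{m,q}^{N}$, which is nonnegative because each $R_{m,q,k}^{N}$ in \eqref{eqn:multicast_rate_1} is a base-two logarithm of a quantity at least one, so every minimum $R_{m,q}^{N}$ is nonnegative. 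Hence the candidate attains a (P3.1)-objective equal to $\sum_{q}R_{m,q}^{N}+\sum_{k}R_{u,k}^{C0}\ge R^{C0}=\bar{R}^{C0}$, and optimality of (P3.1) yields $\bar{R}^{N}\ge\bar{R}^{C0}$. I expect the only delicate point to be justifying that all eigenbeams may be designated multicast under the framework's convention $Q\le\mathrm{rank}(\mathbf{R}_{\mathbf{r}})$; this is precisely what $Q=N$ secures, since any feasible $\mathbf{R}_{\mathbf{r}}^\star$ has rank at most $N$ and can thus be realized entirely by information-bearing beams with no sensing-only remainder.
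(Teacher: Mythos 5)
Your proposal is correct and follows essentially the same route as the paper's proof: the right inequality via nonnegativity of the sensing-interference term $\mathbf{h}_k^H\mathbf{R}_{\mathbf{r}}\mathbf{h}_k$, and the left inequality by eigen-decomposing the Ideal-SenIC sensing covariance into $Q=N$ multicast-bearing beams (zero-padding unused beams) with $\mathbf{R}_{\widetilde{\mathbf{r}}}=\mathbf{0}$, so the unicast rates coincide and the nonnegative multicast rates only add. The only cosmetic difference is that you instantiate the argument at the optimizers while the paper argues from arbitrary feasible points, which is immaterial.
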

\begin{proof}
  See Appendix A.
\end{proof}

\begin{remark} \label{remark:multiple}
  \emph{\textbf{Proposition \ref{proposition_2}} shows that when the same sensing beampattern accuracy is achieved, the proposed NOMA-inspired ISAC framework can achieve a higher communication throughput compared to the existing Ideal-SenIC ISAC and No-SenIC ISAC frameworks,
  which indicates the benefits of further exploiting the sensing signal for communication. Furthermore, although the inequalities \eqref{eqn:multiple_rate_ieqn}
  is proved under a strong condition $M=N$, our numerical results provided in the next section show that $M=1$ is sufficient. }
\end{remark}

\subsubsection{Special Single-user Scenario}
For the special single-user scenario, the corresponding SOOP of Ideal-SenIC and No-SenIC ISAC can be obtained following the same procedure in Section \ref{sec:single}, which is given by 
\begin{subequations}
  \begin{align}
      \text{(P6.1):} &\min_{\delta, \mathbf{W}, \mathbf{R}_{\mathbf{s}}} \quad \frac{1}{L} \sum_{l=1}^{L}\left| \delta \phi(\theta_l) - \mathbf{a}^H(\theta_l) \mathbf{R}_{{\bf x}} \mathbf{a}(\theta_l)  \right|^2 \\
      \mathrm{s.t.} \quad & \mathbf{h}^H \mathbf{W} \mathbf{h} - (2^{\epsilon_2}-1)(p \mathbf{h}^H \mathbf{R}_{\mathbf{s}} \mathbf{h} + \sigma_n^2 ) \ge 0,\\
      & \mathrm{tr}(\mathbf{W} + \mathbf{R}_{\mathbf{s}}) = P_t, \\
      & \mathbf{W} \succeq 0, \mathbf{R}_{\mathbf{s}} \succeq 0,
  \end{align}
\end{subequations}
Problem (P6.1) is convex for both $p=0$ and $p=1$ and thus can be optimally solved. Although the rank-one 
constraint of $\mathbf{W}$ is omitted in problem (P6.1), there always exists a globally optimal rank-one solution according to \cite[Proposition 1]{hua2021optimal} and \cite[Proposition 2]{hua2021optimal}.

\begin{proposition} \label{proposition_1}
  \emph{Given a specific value of $\epsilon_2$, let $\bar{\mathbf{R}}_{\mathbf{x}}^N$, $\bar{\mathbf{R}}_{\mathbf{x}}^{C0}$, and $\bar{\mathbf{R}}_{\mathbf{x}}^{C1}$   
  denote the globally optimal transmit covariance matrix to (P4.1), (P6.1) with $p=0$, and (P6.1) with $p=1$, respectively. Then, the following equality holds:
  \begin{equation}
      L(\delta^N, \bar{\mathbf{R}}_{\mathbf{x}}^N) = L(\delta^{C0}, \bar{\mathbf{R}}_{\mathbf{x}}^{C0}) = L(\delta^{C1}, \bar{\mathbf{R}}_{\mathbf{x}}^{C1}),
  \end{equation}
  where $\delta^N$, $\delta^{C0}$, and $\delta^{C1}$ denote the corresponding optimal scaling vectors. 
  }
\end{proposition}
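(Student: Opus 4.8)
The plan is to exploit the fact that all three problems share the \emph{same} objective, the beampattern loss $L(\mathbf{R}_{\mathbf{x}})$, which depends on the optimization variables only through the total transmit covariance $\mathbf{R}_{\mathbf{x}} = \mathbf{W} + \mathbf{R}_{\mathbf{r}}$ (recall $P(\theta_l, \mathbf{R}_{\mathbf{x}}) = \mathbf{a}^H(\theta_l)\mathbf{R}_{\mathbf{x}}\mathbf{a}(\theta_l)$). Consequently, each optimal value is the minimum of one fixed function over a feasible set, and it suffices to show that the three feasible sets, once \emph{projected} onto the variable $\mathbf{R}_{\mathbf{x}}$, coincide. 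I would therefore recast (P4.1), (P6.1) with $p=0$, and (P6.1) with $p=1$ as minimizations of $L(\mathbf{R}_{\mathbf{x}})$ over, respectively, $\mathcal{F}_N$, $\mathcal{F}_{C0}$, $\mathcal{F}_{C1}$, where $\mathcal{F}_N = \{\mathbf{R}_{\mathbf{x}} \succeq 0 : \mathrm{tr}(\mathbf{R}_{\mathbf{x}}) = P_t,\ \mathbf{h}^H\mathbf{R}_{\mathbf{x}}\mathbf{h} \ge (2^{\epsilon_2}-1)\sigma_n^2\}$ and $\mathcal{F}_{C0}, \mathcal{F}_{C1}$ are the images of the corresponding $(\mathbf{W}, \mathbf{R}_{\mathbf{r}})$-feasible sets under the map $(\mathbf{W}, \mathbf{R}_{\mathbf{r}}) \mapsto \mathbf{W} + \mathbf{R}_{\mathbf{r}}$.

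The core of the argument is a cyclic chain of three set inclusions. First, $\mathcal{F}_{C1} \subseteq \mathcal{F}_{C0}$: since $2^{\epsilon_2}-1 \ge 0$ and $\mathbf{h}^H\mathbf{R}_{\mathbf{r}}\mathbf{h} \ge 0$, the $p=1$ rate constraint $\mathbf{h}^H\mathbf{W}\mathbf{h} \ge (2^{\epsilon_2}-1)(\mathbf{h}^H\mathbf{R}_{\mathbf{r}}\mathbf{h} + \sigma_n^2)$ implies the $p=0$ constraint $\mathbf{h}^H\mathbf{W}\mathbf{h} \ge (2^{\epsilon_2}-1)\sigma_n^2$, so every No-SenIC-feasible pair is Ideal-SenIC-feasible. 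Second, $\mathcal{F}_{C0} \subseteq \mathcal{F}_N$: for any Ideal-SenIC-feasible pair, $\mathbf{R}_{\mathbf{x}} = \mathbf{W} + \mathbf{R}_{\mathbf{r}} \succeq 0$, $\mathrm{tr}(\mathbf{R}_{\mathbf{x}}) = P_t$, and $\mathbf{h}^H\mathbf{R}_{\mathbf{x}}\mathbf{h} = \mathbf{h}^H\mathbf{W}\mathbf{h} + \mathbf{h}^H\mathbf{R}_{\mathbf{r}}\mathbf{h} \ge \mathbf{h}^H\mathbf{W}\mathbf{h} \ge (2^{\epsilon_2}-1)\sigma_n^2$, so the induced $\mathbf{R}_{\mathbf{x}}$ lies in $\mathcal{F}_N$. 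Third, the closing inclusion $\mathcal{F}_N \subseteq \mathcal{F}_{C1}$: given any $\mathbf{R}_{\mathbf{x}} \in \mathcal{F}_N$, choose the trivial decomposition $\mathbf{W} = \mathbf{R}_{\mathbf{x}}$ and $\mathbf{R}_{\mathbf{r}} = \mathbf{0}$; then $\mathbf{h}^H\mathbf{R}_{\mathbf{r}}\mathbf{h} = 0$, so the $p=1$ constraint reduces to $\mathbf{h}^H\mathbf{W}\mathbf{h} = \mathbf{h}^H\mathbf{R}_{\mathbf{x}}\mathbf{h} \ge (2^{\epsilon_2}-1)\sigma_n^2$, which holds by assumption, while the power and PSD constraints are immediate. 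Chaining these gives $\mathcal{F}_{C1} \subseteq \mathcal{F}_{C0} \subseteq \mathcal{F}_N \subseteq \mathcal{F}_{C1}$, hence $\mathcal{F}_N = \mathcal{F}_{C0} = \mathcal{F}_{C1}$, and minimizing the common objective $L(\mathbf{R}_{\mathbf{x}})$ over identical sets yields $L(\bar{\mathbf{R}}_{\mathbf{x}}^N) = L(\bar{\mathbf{R}}_{\mathbf{x}}^{C0}) = L(\bar{\mathbf{R}}_{\mathbf{x}}^{C1})$.

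The step I expect to be the conceptual crux is the closing inclusion $\mathcal{F}_N \subseteq \mathcal{F}_{C1}$, because it is where the single-user structure is essential: it is precisely the freedom to set $\mathbf{R}_{\mathbf{r}} = \mathbf{0}$ (dedicate all power to the communication beam, eliminating the sensing interference entirely) that lets the most constrained No-SenIC framework realize every covariance available to the unconstrained NOMA framework. One technical point to address carefully is that (P6.1) is stated as the SDR-relaxed program without the rank-one constraint on $\mathbf{W}$, so the decomposition $\mathbf{W} = \mathbf{R}_{\mathbf{x}}$ (which is generally of rank greater than one) is legitimately feasible; I would note that this does not affect the optimal value since, as invoked in the paragraph preceding the proposition via \cite{hua2021optimal}, (P6.1) always admits a globally optimal rank-one solution. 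With the objective depending solely on $\mathbf{R}_{\mathbf{x}}$ and the feasible sets shown identical, no separate optimality or KKT analysis is needed, and the equality of all three losses follows directly.
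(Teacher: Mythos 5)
Your proof is correct and follows essentially the same route as the paper: your three inclusions $\mathcal{F}_{C1} \subseteq \mathcal{F}_{C0}$, $\mathcal{F}_{C0} \subseteq \mathcal{F}_N$, and $\mathcal{F}_N \subseteq \mathcal{F}_{C1}$ (via the trivial decomposition $\mathbf{W}=\mathbf{R}_\mathbf{x}$, $\mathbf{R}_\mathbf{r}=\mathbf{0}$) correspond one-to-one to the paper's Lemmas 3, 4, and 5, closed cyclically in exactly the same way. Your packaging---projecting all feasible sets onto $\mathbf{R}_\mathbf{x}$ and proving set equality rather than chaining inequalities on optimal values---is a cosmetic streamlining (it needs only one direction of the paper's Lemma 5), not a different argument.
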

\begin{proof}
  See Appendix B.
\end{proof}

\begin{remark} \label{remark_single}
  \emph{\textbf{Proposition \ref{proposition_1}} indicates that the proposed NOMA-inspired ISAC achieves the same optimal performance as the existing SenIC ISAC frameworks in the single-user scenario, which provides the following insights. On the one hand, there is no need of exploiting the sensing signal to transmit information in the single-user scenario. On the other hand, when only one communication user is served, the ISAC system can always achieve optimal performance regardless of the capability of the sensing interference cancellation at the receiver. Thus, sensing interference coordination is not needed.}
\end{remark}
\section{Numerical Results} \label{sec:results}

In this section, the numerical results are provided for characterizing the proposed NOMA-inspired ISAC framework. We assume a dual-functional BS equipped with a ULA with half-wavelength spacing, which works in the tracking mode to sensing $3$ targets in the directions $\Phi = \{-60^\circ, 0^\circ, 60^\circ\}$. The transmit power budget at BS and the noise power at the communication users are set to $P_t = 20$ dBm and $\sigma_n^2 = -80$ dBm, respectively. The channels between BS and communication users are assumed to experience Rayleigh fading and $80$ dB path loss. Given the directions of sensing targets, the desired beampattern is given by 
\begin{equation}
    \phi (\theta_l) = \begin{cases}
        1, &\theta_l \in [\varphi - \frac{\Delta}{2}, \varphi - \frac{\Delta}{2}], \forall \varphi \in \Phi, \\
        0, &\mathrm{otherwise},
    \end{cases}
\end{equation}
where $\Delta$ is the desired beam width, which is set to $10^\circ$ in the simulation, and the angle grid $\{\theta_l\}_{l=1}^L$ is set to $[-\frac{\pi}{2}: \frac{\pi}{100} : \frac{\pi}{2}]$. In order to show the sensing performance directly, we access it depending on the following matching error:
\begin{equation}
    \epsilon = \frac{1}{L} \sum_{l=1}^{L}\left| P(\theta, \mathbf{R}_\mathbf{x}^S) - P(\theta, \mathbf{R}_\mathbf{x})  \right|^2.
\end{equation}
Here, $\mathbf{R}_\mathbf{x}^S$ denotes the optimal covariance matrix for the sensing-only system, which can be obtained by solving the following convex optimization problem:
\begin{subequations}
    \begin{align}
        \min_{\delta, \mathbf{R}_{\mathbf{x}} \succeq 0. } \quad & L(\delta, \mathbf{R}_\mathbf{x}) = \frac{1}{L} \sum_{l=1}^{L}\left| \delta \phi(\theta_l) - \mathbf{a}^H(\theta_l) \mathbf{R}_{{\bf x}} \mathbf{a}(\theta_l)  \right|^2 \\
        \mathrm{s.t.} \quad & \mathrm{tr}(\mathbf{R}_{\mathbf{x}}) = P_t.
    \end{align}
\end{subequations}

\subsection{Multiple User Scenario}
In this section, numerical results of the multiple-user scenario are presented. We set $N=8$ and $K=5$. For Algorithm \ref{alg:A}, the initial penalty factor is set to $\zeta = 10^2$ and its reduction factor is set to $\rho = 0.2$. The convergence thresholds of the inner and outer iterations are set to $\tau_1 = 10^{-2}$ and $\tau_2 = 10^{-4}$, respectively.

\subsubsection{Benchmark schemes}
Apart from Ideal-SenIC ISAC and No-SenIC ISAC, we also consider the following benchmark schemes for comparison.
\begin{itemize}
    \item \textbf{Communication signal only (Com) ISAC} \cite{liu2018mu,liu2018toward}: In this scheme, the BS only transmits the communication signals for achieving both communication and sensing. Thus, the covariance matrix of the transmit signal is $\mathbf{R}_\mathbf{x} = \sum_{j \in \mathcal{K}} \mathbf{w}_{c,j} \mathbf{w}_{c,j}^H$. 
    The achievable rate at user $k$ is given by
    \begin{equation}
        R_k^{Com} = \log_2 \left( 1 + \frac{| \mathbf{h}_k^H \mathbf{w}_{c,k}|^2}{ \sum_{j \in \mathcal{K} \setminus k } |\mathbf{h}_k^H \mathbf{w}_{c,j}|^2 + \sigma_n^2} \right).
    \end{equation}
    The communication throughput is given by $R^{Com} = \sum_{k \in \mathcal{K}} R_k^{Com}$. The resultant MOOP in the multiple user scenario can be solved by the proposed Algorithm \ref{alg:A}.

    \item \textbf{NOMA-inspired sensing interference cancellation (NOMA-SenIC) ISAC} \cite{wang2021noma}: This scheme is essentially the proposed framework without extra information transmission. In this scheme, the information embedded into the sensing signal is merely used to carry out SIC. In this case, the communication throughput is $R^{N0} = \sum_{k \in \mathcal{K}} R_k^{N}$. The resultant MOOP can also be solved by the proposed Algorithm \ref{alg:A}.
\end{itemize}

\begin{figure}[t!]
    \centering
    \begin{minipage}[t]{0.4\textwidth}
        \vspace{0pt}
        \centering
        \includegraphics[width=1\textwidth]{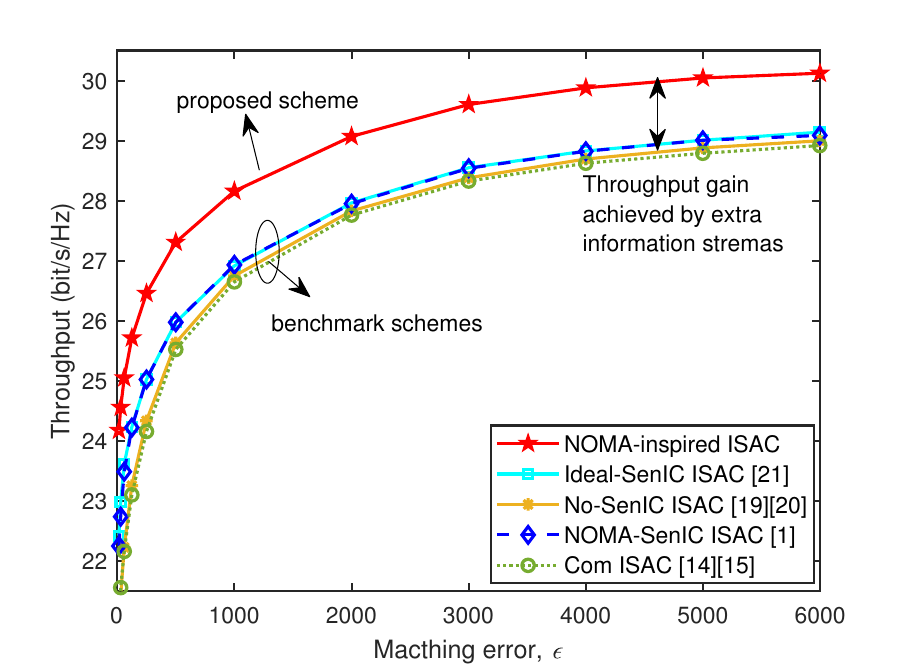}
        \caption{Trade-off region in the multiple-user scenario with $M=1$. }
        \label{fig:multiple_user_pareto}
    \end{minipage}\hspace{3mm}
    \begin{minipage}[t]{0.4\textwidth}
        \vspace{0pt}
        \centering
        \includegraphics[width=1\textwidth]{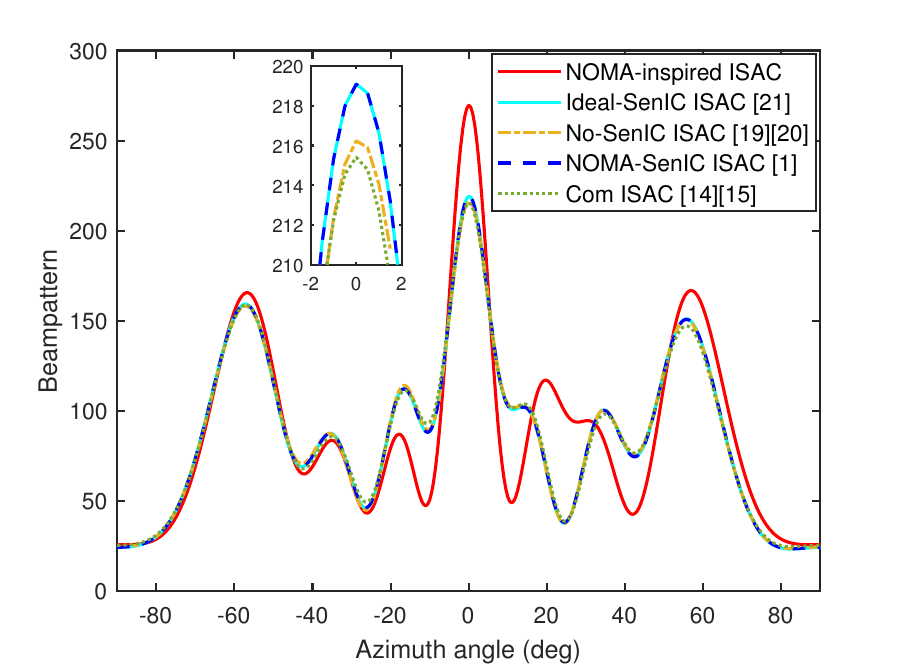}
        \caption{Obtained sensing beampattern in the multiple-user scenario with throughput $R=26.4$ bit/s/Hz and $M=1$.}
        \label{fig:multiple_user_beampattern}
    \end{minipage}
    \vspace{-0.5cm}
\end{figure}

\subsubsection{Trade-off Region}
In Fig. \ref{fig:multiple_user_pareto}, the trade-off region between the communication throughput and the beampattern matching error is studied, which is obtained by averaging over $50$ random channel realizations for each scheme. For the proposed NOMA-inspired ISAC framework, we set $M=1$. It can be seen that the proposed NOMA-inspired ISAC framework achieves a larger trade-off region than the other benchmark schemes. In particular, although only one beam of the sensing signal is exploited to transmit the extra information stream, the NOMA-inspired ISAC still significantly outperforms the Ideal-SenIC ISAC. The reason behind this can be explained as follows. The obtained sensing signal is dominated by the beam corresponding to the largest eigenvalue, which can be used to carry the extra information stream for achieving a considerable rate. Meanwhile, the other beams of the sensing signal have very low power, thus making negligible interference to communication. This fact is also revealed by the result achieved by the NOMA-SenIC ISAC framework, where even only one beam of the sensing signal is eliminated but the achieved performance is close to that of Ideal-SenIC ISAC. Finally, the No-SenIC ISAC and Com-ISAC frameworks have the worst performance due to the sensing interference and the lack of transmitting DoFs, respectively. These results are consistent with \textbf{Proposition \ref{proposition_2}} and \textbf{Remark \ref{remark:multiple}}.

\subsubsection{Sensing Beampattern}
In Fig. \ref{fig:multiple_user_beampattern}, the sensing beampatterns obtained via different schemes over one channel realization when $R=26.4$ bit/s/Hz and $M=1$ are depicted. One can observe that the proposed NOMA-inspired ISAC framework is capable of achieving the best sensing beampattern, where there is a noticeable gain of power in the target directions and less power leakage in the undesired directions. When there is no extra information stream, i.e., in NOMA-SenIC ISAC, the beampattern is still close to that implemented by the Ideal-SenIC ISAC framework. Additionally, the No-SenIC ISAC and Com-SenIC ISAC frameworks have the largest beampattern distortion.

\begin{figure}[t!]
    \centering
    \begin{minipage}[t]{0.4\textwidth}
        \vspace{0pt}
        \centering
        \includegraphics[width=1\textwidth]{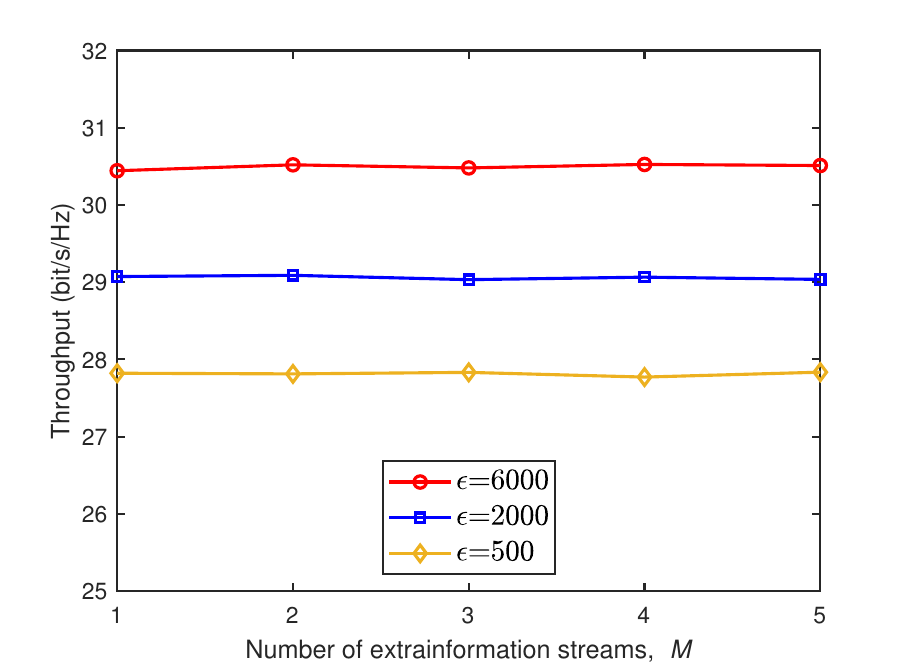}
        \caption{Communication throughput versus the number of extra information streams, $M$.}
        \label{fig:rate_vs_M}
    \end{minipage}\hspace{3mm}
    \begin{minipage}[t]{0.4\textwidth}
        \vspace{0pt}
        \centering
        \includegraphics[width=1\textwidth]{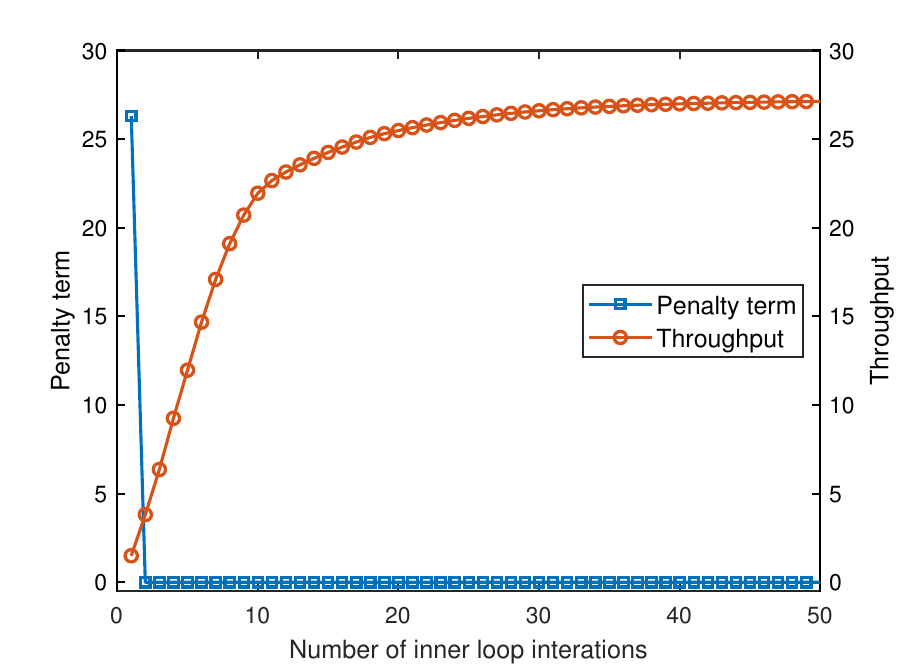}
        \caption{Convergence of Algorithm \ref{alg:A} when $M=1$ and $\epsilon=2 \times 10^3$.}
        \label{fig:convergence}
    \end{minipage}
    \vspace{-0.3cm}
\end{figure}

\subsubsection{Throughput versus $M$} 
In Fig. \ref{fig:rate_vs_M}, the relationship between the communication throughput and the number of the information-bearing sensing beams $M$ is investigated. The results are obtained by averaging over $50$ random channel realizations. It is clear that when the value of $M$ increases, namely more beams of the sensing signal are exploited to transmit the extra information streams and fewer beams interference with communication, the communication throughput is almost unchanged. This further verifies that the power of the sensing signal is mainly concentrated in the beam corresponding to the largest eigenvalue and other beams have negligible interference in communication. Thus, the simplest design of the proposed NOMA-inspired ISAC framework with $M=1$ is sufficient in practice. 

\subsubsection{Convergence of Algorithm \ref{alg:A}}
In Fig. \ref{fig:convergence}, we demonstrate the convergence behavior of the proposed algorithm over one random channel realization when $M=1$ and $\epsilon=2 \times 10^3$. We can observe that the throughput gradually increases to a stable value, while the penalty value quickly converges to almost zero as the number of inner iterations increases, which indicates that the resultant beamformers are rank-one. 

\subsection{Single-User Scenario}
In this section, the numerical results of the single-user scenario are presented, where the proposed NOMA-inspired ISAC is compared with the existing Ideal-SenIC ISAC and No-SenIC ISAC.
\begin{figure}[t!]
    \centering
    \begin{minipage}[t]{0.4\textwidth}
        \vspace{0pt}
        \centering
        \includegraphics[width=1\textwidth]{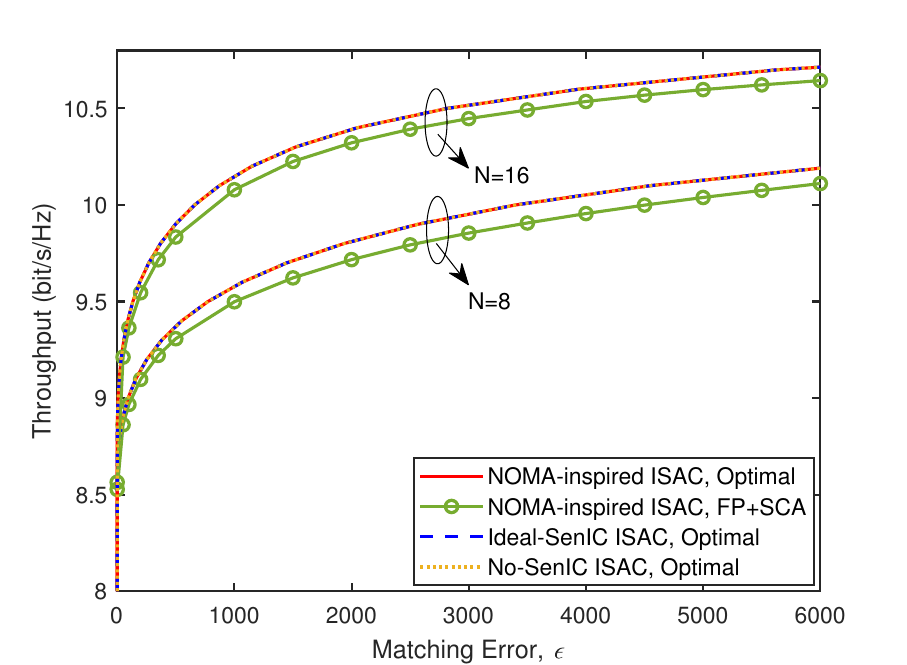}
        \caption{Trade-off region in the single user scenario.}
        \label{fig:single_user_pareto}
    \end{minipage}\hspace{3mm}
    \begin{minipage}[t]{0.4\textwidth}
        \vspace{0pt}
        \centering
        \includegraphics[width=1\textwidth]{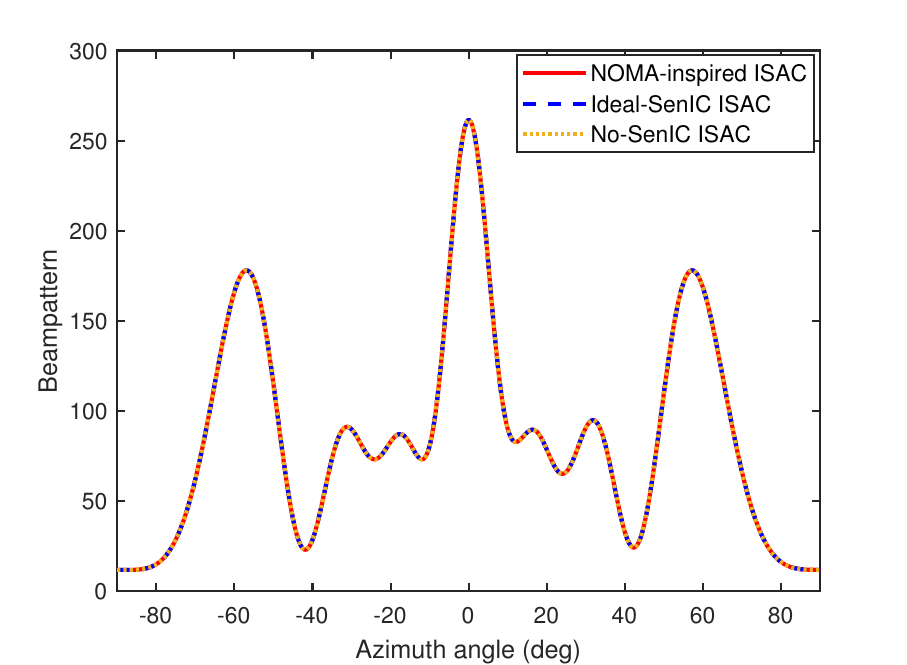}
        \caption{Obtained transmit beampattern in the single user scenario with $N=8$ and throughput $R=9.0$ bit/s/Hz}
        \label{fig:single_user_beampattern}
    \end{minipage}
\end{figure}

\subsubsection{Trade-off Region}
In Fig. \ref{fig:single_user_pareto}, we investigate the trade-off region of the communication throughput and the beampattern matching error, which is obtained by averaging over one random channel realization. It can be observed that compared to the suboptimal solution obtained via Algorithm \ref{alg:A}, the globally optimal solution via QSDP achieves a better trade-off region. Furthermore, one can see that the proposed NOMA-inspired ISAC achieve exactly the same trade-off region as the existing Ideal-SenIC ISAC and No-SenIC ISAC, which is consistent with \textbf{Proposition \ref{proposition_1}} and \textbf{Remark \ref{remark_single}}.

\subsubsection{Sensing Beampattern}
In Fig. \ref{fig:single_user_beampattern}, we study the obtained sensing beampattern over one random channel realization by NOMA-inspired ISAC, Ideal-SenIC ISAC, and the No-SenIC ISAC when the communication throughput is $R=9.0$ bit/s/Hz. We set $N=8$. It can be observed that all three frameworks can achieve dominant peaks in the target directions. Furthermore, the transmit beampatterns achieved by them are exactly the same, which is also consistent with \textbf{Proposition \ref{proposition_1}} and \textbf{Remark \ref{remark_single}}.

\section{Conclusions} \label{sec:conclusion}
In this paper, a NOMA-inspired ISAC framework was proposed, where the transmit sensing signal at BS was further exploited for conveying the extra information streams based on the concept of NOMA. A tailor-made MOOP designing the transmit beamforming was formulated with the aim of maximizing the communication throughput and minimizing the sensing beampattern matching error. For the multiple-user scenario, the formulated MOOP was transformed to a SOOP via $\epsilon$-constrtaint method and solved by the proposed double-layer FP-SCA-based BCD algorithm. For the special-single user scenario, the globally optimal solution was obtained by transforming the MOOP  to a convex QSDP. It was shown by the theoretical and numerical results that the trade-off region and sensing beampattern achieved by the proposed NOMA-inspired ISAC in the multiple-user scenario are better than those achieved by the existing SenIC ISAC frameworks while becoming the same in the single-user scenario. Finally, it was revealed that using only one beam of the sensing signal to convey the extra information stream still has very high efficiency.

\section*{Appendix~A: Proof of Proposition \ref{proposition_2}} \label{appendix_2}
\renewcommand{\theequation}{A.\arabic{equation}}
\setcounter{equation}{0}
Firstly, let $(\breve{\delta}, \{\breve{\mathbf{w}}_{c,k}\}, \breve{\mathbf{R}}_{\mathbf{s}})$ denote any feasible solutions to (P5.1). 
Then, it holds that 
\begin{align}
    &\breve{R}_k^{C0} = \log_2 \left( 1 + \frac{| \mathbf{h}_k^H \breve{\mathbf{w}}_{c,k}|^2}{ \sum_{j \in \mathcal{K} \setminus k} |\mathbf{h}_k^H \breve{\mathbf{w}}_{c,j}|^2 + \sigma_n^2} \right) \nonumber\\
    & \ge \log_2 \left( 1 + \frac{| \mathbf{h}_k^H \breve{\mathbf{w}}_{c,k}|^2}{ \sum_{j \in \mathcal{K} \setminus k} |\mathbf{h}_k^H \breve{\mathbf{w}}_{c,j}|^2 +  \mathbf{h}_k^H \breve{\mathbf{R}}_{\mathbf{s}} \mathbf{h}_k + \sigma_n^2} \right) = \breve{R}_k^{C1},
\end{align}  
which is because the matrix $\breve{\mathbf{R}}_\mathbf{s}$ is positive semidefinite. Thus, we have
\begin{equation}
    \breve{R}^{C0} = \sum_{k \in \mathcal{K}} \breve{R}_k^{C0} \ge \sum_{k \in \mathcal{K}} \breve{R}_k^{C1} = \breve{R}^{C1}.
\end{equation} 
Then, when $M=N$, we define the $\breve{\mathbf{w}}_{s,i}, \forall i \in \mathcal{M}$ as follows: 
\begin{equation}
    \breve{\mathbf{w}}_{s,i} = \begin{cases}
        \sqrt{\breve{\lambda}_i} \breve{\mathbf{v}}_i, &0 \le i \le \mathrm{rank}(\breve{\mathbf{R}}_{\mathbf{s}}) \\
        \mathbf{0}_{N \times 1}, &\mathrm{rank}(\breve{\mathbf{R}}_{\mathbf{s}}) < i \le N
    \end{cases},
\end{equation}
where $\breve{\lambda}_q$ and $\breve{\mathbf{v}}_q$ are the $q$-th eigenvalue and eigenvector of $\breve{\mathbf{R}}_{\mathbf{s}}$, respectively. Then, it can be verified that 
$(\{ \breve{\mathbf{w}}_{c,k} \}, \{\breve{\mathbf{w}}_{s,i}\}, \breve{\mathbf{R}}_{\tilde{\mathbf{s}}}=\mathbf{0}_{N \times N} )$ is feasible solutions to (P3.1).
As a result, the achievable unicast rate in (P3.1) is given by
\begin{equation}
    \breve{R}_k^{N} = \log_2 \left( 1 + \frac{| \mathbf{h}_k^H \breve{\mathbf{w}}_{c,k}|^2}{ \sum_{j \in \mathcal{K} \setminus k} |\mathbf{h}_k^H \breve{\mathbf{w}}_{c,j}|^2 + \sigma_n^2} \right) = \breve{R}_k^{C0}.
\end{equation}
The extra rate $\hat{R}_i^{N} \ge 0, \forall i \in \mathcal{M}$ achieved by the sensing signal in (P3.1) can be calculated according to \eqref{eqn:multicast_rate_1} and \eqref{eqn:multicast_rate_2}. Thus, it holds that 
\begin{equation}
    \breve{R}^{N} = \sum_{i \in \mathcal{M}} \hat{R}_i^{N} + \sum_{k \in \mathcal{K}} \breve{R}_k^{N} \ge \sum_{k \in \mathcal{K}} \breve{R}_k^{C0} = \breve{R}^{C0}.
\end{equation} 
Based on above analysis, we can conclude that for any feasible solutions $(\{ \breve{\mathbf{w}}_{c,k} \}, \breve{\mathbf{R}}_{\mathbf{s}})$ to (P5.1), it always holds that $\breve{R}^{C0} \ge \breve{R}^{C1}$. Furthermore, there always exists feasible solutions to (P3.1) such that $\breve{R}^{N} \ge \breve{R}^{C0}$. Therefore, for the optimal values $\bar{R}^{N}$, $\bar{R}^{C0}$, and $\bar{R}^{C1}$ of the three problems, it must hold that $\bar{R}^{N} \ge \bar{R}^{C0} \ge \bar{R}^{C1}$ and \textbf{Proposition \ref{proposition_2}} is finally proved.

\section*{Appendix~B: Proof of Proposition \ref{proposition_1}} \label{appendix_1}
\renewcommand{\theequation}{B.\arabic{equation}}
\setcounter{equation}{0}
In order to prove Proposition \ref{proposition_1}, we firstly give the following three lemmas.
\begin{lemma} \label{lemma_3}
    Given a specific value of $\epsilon_2$, it always holds that $L(\delta^{C1}, \bar{\mathbf{R}}_{\mathbf{x}}^{C1}) \ge L(\delta^{C0}, \bar{\mathbf{R}}_{\mathbf{x}}^{C0})$. 
\end{lemma}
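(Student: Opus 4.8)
The claim is that $L(\bar{\mathbf{R}}_{\mathbf{x}}^{C1}) \ge L(\bar{\mathbf{R}}_{\mathbf{x}}^{C0})$, i.e., the optimal beampattern matching error for the No-SenIC scheme ($p=1$) is at least as large as that for the Ideal-SenIC scheme ($p=0$). The plan is to establish this via a \emph{feasibility inclusion} argument: I would show that any feasible point for problem (P6.1) with $p=1$ is also feasible for (P6.1) with $p=0$. Since both problems minimize the \emph{same} objective $L(\mathbf{R}_{\mathbf{x}})$ over a constraint set, a larger feasible region can only yield a smaller (or equal) optimal value, which immediately gives the desired inequality.

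\textbf{Key steps.} First I would write down the two constraint sets explicitly. The only difference between the $p=1$ and $p=0$ versions of (P6.1) lies in the communication-rate (QoS) constraint
\begin{equation}
\mathbf{h}^H \mathbf{W} \mathbf{h} - (2^{\epsilon_2}-1)\bigl(p\,\mathbf{h}^H \mathbf{R}_{\mathbf{r}} \mathbf{h} + \sigma_n^2 \bigr) \ge 0,
\end{equation}
while the power constraint $\mathrm{tr}(\mathbf{W}) + \mathrm{tr}(\mathbf{R}_{\mathbf{r}}) = P_t$ and the semidefinite constraints $\mathbf{W} \succeq 0$, $\mathbf{R}_{\mathbf{r}} \succeq 0$ are identical in both cases. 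Second, I would observe that because $\mathbf{R}_{\mathbf{r}} \succeq 0$, the quadratic form $\mathbf{h}^H \mathbf{R}_{\mathbf{r}} \mathbf{h} \ge 0$, and since $2^{\epsilon_2}-1 \ge 0$, the subtracted term $(2^{\epsilon_2}-1)\,\mathbf{h}^H \mathbf{R}_{\mathbf{r}} \mathbf{h}$ is nonnegative. Hence, for any $(\mathbf{W}, \mathbf{R}_{\mathbf{r}})$ satisfying the $p=1$ constraint, the left-hand side only increases when $p$ is set to $0$, so the $p=0$ constraint is automatically satisfied. This shows the feasible set of the $p=1$ problem is a subset of the feasible set of the $p=0$ problem.

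\textbf{Concluding.} With the inclusion established, I would invoke the elementary principle that minimizing a fixed objective over a smaller set gives a value no smaller than minimizing over a larger set: denoting the optimal values $L(\bar{\mathbf{R}}_{\mathbf{x}}^{C1})$ and $L(\bar{\mathbf{R}}_{\mathbf{x}}^{C0})$, the subset relation yields $L(\bar{\mathbf{R}}_{\mathbf{x}}^{C1}) \ge L(\bar{\mathbf{R}}_{\mathbf{x}}^{C0})$. I do not anticipate a genuine obstacle here, as the argument is essentially structural; the only point requiring mild care is confirming that both problems attain their optima over the \emph{same} objective (the beampattern error depends only on $\mathbf{R}_{\mathbf{x}} = \mathbf{W} + \mathbf{R}_{\mathbf{r}}$, which is unaffected by the value of $p$) and that both feasible sets are nonempty and the minima are well-defined. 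Since the excerpt already notes (P6.1) is convex and admits a globally optimal rank-one solution for both $p=0$ and $p=1$, attainment is guaranteed, and the feasibility-inclusion argument completes the proof cleanly.
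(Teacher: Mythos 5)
Your proposal is correct and matches the paper's own proof: both establish that every feasible solution of (P6.1) with $p=1$ is feasible for (P6.1) with $p=0$ (since $\mathbf{R}_{\mathbf{r}} \succeq 0$ makes the $p=1$ rate constraint strictly more restrictive), and then conclude via the standard larger-feasible-set-gives-smaller-minimum argument. Your write-up is in fact slightly more explicit than the paper's, which merely asserts the inclusion without spelling out the nonnegativity of $(2^{\epsilon_2}-1)\,\mathbf{h}^H \mathbf{R}_{\mathbf{r}} \mathbf{h}$.
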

\begin{proof}
    It can be observed that any feasible solutions to (P6.1) with $p=1$ are also feasible to (P6.1) with $p=0$ but not vice versa, which means the feasible region of (P6.1) with $p=0$ is larger than that of (P6.1) with $p=1$. Thus, the optimal beampattern matching error achieved by (P6.1) with $p=0$ is always smaller or equal to that achieved by (P6.1) with $p=1$ for the same value of $\epsilon_2$. As a result, \textbf{Lemma \ref{lemma_3}} is proved.
\end{proof}

\begin{lemma} \label{lemma_4}
    Given a specific value of $\epsilon_2$, it always holds that $L(\delta^{C0}, \bar{\mathbf{R}}_{\mathbf{x}}^{C0}) \ge L(\delta^N, \bar{\mathbf{R}}_{\mathbf{x}}^{N})$. 
\end{lemma}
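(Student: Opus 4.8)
The plan is to prove $L(\bar{\mathbf{R}}_{\mathbf{x}}^{C0}) \ge L(\bar{\mathbf{R}}_{\mathbf{x}}^{N})$ via a feasible-set containment argument, exhibiting (P4.1) as a relaxation of (P6.1) with $p=0$ once both are viewed purely as optimizations over the transmit covariance matrix $\mathbf{R}_{\mathbf{x}}$. The crucial observation is that the two problems minimize the identical objective $L(\mathbf{R}_{\mathbf{x}})$, which depends on the decision variables only through $\mathbf{R}_{\mathbf{x}} = \mathbf{W} + \mathbf{R}_{\mathbf{r}}$. Hence it suffices to compare their feasible sets expressed in terms of $\mathbf{R}_{\mathbf{x}}$: if every covariance matrix feasible for (P6.1) with $p=0$ is also feasible for (P4.1), then the minimum of $L$ over the larger set (P4.1) cannot exceed the minimum over the smaller set (P6.1), which is exactly the claimed inequality.

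First I would take an arbitrary feasible pair $(\mathbf{W}, \mathbf{R}_{\mathbf{r}})$ of (P6.1) with $p=0$ and form $\mathbf{R}_{\mathbf{x}} = \mathbf{W} + \mathbf{R}_{\mathbf{r}}$. Verifying the three constraints of (P4.1) is then immediate: positive semidefiniteness $\mathbf{R}_{\mathbf{x}} \succeq 0$ follows since $\mathbf{R}_{\mathbf{x}}$ is a sum of two positive semidefinite matrices; the power constraint $\mathrm{tr}(\mathbf{R}_{\mathbf{x}}) = \mathrm{tr}(\mathbf{W}) + \mathrm{tr}(\mathbf{R}_{\mathbf{r}}) = P_t$ holds by construction; and the rate constraint follows because $\mathbf{R}_{\mathbf{r}} \succeq 0$ gives $\mathbf{h}^H \mathbf{R}_{\mathbf{r}} \mathbf{h} \ge 0$, so that $\mathbf{h}^H \mathbf{R}_{\mathbf{x}} \mathbf{h} = \mathbf{h}^H \mathbf{W} \mathbf{h} + \mathbf{h}^H \mathbf{R}_{\mathbf{r}} \mathbf{h} \ge \mathbf{h}^H \mathbf{W} \mathbf{h} \ge (2^{\epsilon_2}-1)\sigma_n^2$, where the final inequality is precisely the $p=0$ rate constraint of (P6.1). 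Since this constructed $\mathbf{R}_{\mathbf{x}}$ attains the same objective value $L(\mathbf{R}_{\mathbf{x}})$ in both problems, the one-way containment of feasible sets is established, and taking the optimal $(\bar{\mathbf{W}}, \bar{\mathbf{R}}_{\mathbf{r}})$ of (P6.1) in particular yields $L(\bar{\mathbf{R}}_{\mathbf{x}}^{N}) \le L(\bar{\mathbf{R}}_{\mathbf{x}}^{C0})$.

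The conceptual point worth emphasizing, rather than a genuine obstacle, is the interpretation of why (P4.1) is a relaxation. In the NOMA-inspired framework with all sensing beams repurposed for information, the entire transmit power $\mathbf{h}^H \mathbf{R}_{\mathbf{x}} \mathbf{h}$ becomes useful signal power, as reflected in the reduced single-user throughput $R^{N} = \log_2(1 + \mathbf{h}^H \mathbf{R}_{\mathbf{x}} \mathbf{h}/\sigma_n^2)$, whereas the Ideal-SenIC framework counts only the dedicated communication part $\mathbf{h}^H \mathbf{W} \mathbf{h}$ toward the rate and discards $\mathbf{h}^H \mathbf{R}_{\mathbf{r}} \mathbf{h}$. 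This renders the rate constraint of (P6.1) strictly stronger than that of (P4.1), which is exactly what delivers the feasibility implication above. Combining the feasible-set containment with the shared objective completes the proof; no rank or convexity considerations beyond those already established in the excerpt are required.
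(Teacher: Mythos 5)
Your proof is correct and follows essentially the same route as the paper: both establish feasible-set containment using $\mathbf{R}_{\mathbf{r}} \succeq 0$ so that $\mathbf{h}^H(\mathbf{W}+\mathbf{R}_{\mathbf{r}})\mathbf{h} \ge \mathbf{h}^H\mathbf{W}\mathbf{h} \ge (2^{\epsilon_2}-1)\sigma_n^2$, the only cosmetic difference being that the paper first rewrites (P4.1) in split variables as an auxiliary problem (P7.1) while you map feasible points of (P6.1) with $p=0$ directly into (P4.1) via $\mathbf{R}_{\mathbf{x}} = \mathbf{W} + \mathbf{R}_{\mathbf{r}}$.
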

\begin{proof}
    In (P6.1) with $p=0$, only the communication signal is exploited for communication, while in (P4.1), both communication and sensing signals are exploited for communication. Thus, (P4.1) can be rewritten as 
    \begin{subequations}
        \begin{align}
            \text{(P7.1):} \min_{\delta, \mathbf{W}, \mathbf{R}_\mathbf{s} } \quad & L(\delta, \mathbf{R}_\mathbf{x}) \\
            \mathrm{s.t.} \quad & \mathbf{h}^H (\mathbf{W} + \mathbf{R}_\mathbf{s}) \mathbf{h} - (2^{\epsilon_2}-1)\sigma_n^2  \ge 0, \\
            & \mathrm{tr}(\mathbf{W} + \mathbf{R}_\mathbf{s}) = P_t, \\
            & \mathbf{W} \succeq 0, \mathbf{R}_\mathbf{s} \succeq 0.
        \end{align}
    \end{subequations}

    \noindent It can be observed that any feasible solutions to (P6.1) with $p=0$ are also feasible to (P7.1) but not vice versa. Thus, \textbf{Lemma \ref{lemma_4}} is proved.
\end{proof}

\begin{lemma} \label{lemma_5}
    Given a specific value of $\epsilon_2$, it always holds that $L(\delta^{N}, \bar{\mathbf{R}}_{\mathbf{x}}^{N}) = L(\delta^{C1}, \bar{\mathbf{R}}_{\mathbf{x}}^{C1})$. 
\end{lemma}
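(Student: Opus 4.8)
The plan is to prove the two inequalities $L(\bar{\mathbf{R}}_{\mathbf{x}}^{N}) \le L(\bar{\mathbf{R}}_{\mathbf{x}}^{C1})$ and $L(\bar{\mathbf{R}}_{\mathbf{x}}^{N}) \ge L(\bar{\mathbf{R}}_{\mathbf{x}}^{C1})$ separately, each by exhibiting a feasibility-preserving mapping between the domains of (P4.1) and (P6.1) with $p=1$. The structural fact I would exploit is that both problems minimize the \emph{same} objective $L(\mathbf{R}_{\mathbf{x}})$, which depends on the decision variables only through the aggregate covariance $\mathbf{R}_{\mathbf{x}} = \mathbf{W} + \mathbf{R}_{\mathbf{r}}$; only the single communication-rate constraint distinguishes the two feasible sets, so the whole argument reduces to comparing those constraints.

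For the direction $L(\bar{\mathbf{R}}_{\mathbf{x}}^{C1}) \le L(\bar{\mathbf{R}}_{\mathbf{x}}^{N})$, I would take the globally optimal $\bar{\mathbf{R}}_{\mathbf{x}}^{N}$ of (P4.1) and construct the candidate $(\mathbf{W}, \mathbf{R}_{\mathbf{r}}) = (\bar{\mathbf{R}}_{\mathbf{x}}^{N}, \mathbf{0})$ for (P6.1) with $p=1$ (the rank-one requirement on $\mathbf{W}$ having been relaxed there, a full-rank $\mathbf{W}$ is admissible). Putting all power into the communication signal and setting the dedicated sensing covariance to zero makes the constraint $\mathbf{h}^H \mathbf{W} \mathbf{h} \ge (2^{\epsilon_2}-1)(\mathbf{h}^H \mathbf{R}_{\mathbf{r}} \mathbf{h} + \sigma_n^2)$ collapse exactly to the (P4.1) constraint $\mathbf{h}^H \bar{\mathbf{R}}_{\mathbf{x}}^{N} \mathbf{h} \ge (2^{\epsilon_2}-1)\sigma_n^2$, while the trace equality and positive-semidefiniteness transfer verbatim; hence the candidate is feasible. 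Since the aggregate covariance is unchanged it attains objective value $L(\bar{\mathbf{R}}_{\mathbf{x}}^{N})$, and optimality of $\bar{\mathbf{R}}_{\mathbf{x}}^{C1}$ forces $L(\bar{\mathbf{R}}_{\mathbf{x}}^{C1}) \le L(\bar{\mathbf{R}}_{\mathbf{x}}^{N})$.

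For the reverse direction I would map any optimal $(\bar{\mathbf{W}}, \bar{\mathbf{R}}_{\mathbf{r}})$ of (P6.1) to the single variable $\mathbf{R}_{\mathbf{x}} = \bar{\mathbf{W}} + \bar{\mathbf{R}}_{\mathbf{r}} = \bar{\mathbf{R}}_{\mathbf{x}}^{C1}$ and check feasibility for (P4.1). Adding $\mathbf{h}^H \bar{\mathbf{R}}_{\mathbf{r}} \mathbf{h}$ to both sides of the $p=1$ constraint gives $\mathbf{h}^H \mathbf{R}_{\mathbf{x}} \mathbf{h} \ge (2^{\epsilon_2}-1)\sigma_n^2 + 2^{\epsilon_2}\,\mathbf{h}^H \bar{\mathbf{R}}_{\mathbf{r}} \mathbf{h} \ge (2^{\epsilon_2}-1)\sigma_n^2$, where the last step uses $\bar{\mathbf{R}}_{\mathbf{r}} \succeq 0$ and $2^{\epsilon_2} > 0$; the trace and semidefinite constraints again transfer directly, so $\bar{\mathbf{R}}_{\mathbf{x}}^{C1}$ is feasible for (P4.1) with identical objective, yielding $L(\bar{\mathbf{R}}_{\mathbf{x}}^{N}) \le L(\bar{\mathbf{R}}_{\mathbf{x}}^{C1})$. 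Alternatively, this same inequality is already available by chaining Lemma \ref{lemma_3} and Lemma \ref{lemma_4}. Combining the two bounds proves Lemma \ref{lemma_5}, and feeding it back into the chain $L(\bar{\mathbf{R}}_{\mathbf{x}}^{C1}) \ge L(\bar{\mathbf{R}}_{\mathbf{x}}^{C0}) \ge L(\bar{\mathbf{R}}_{\mathbf{x}}^{N})$ from Lemmas \ref{lemma_3} and \ref{lemma_4} collapses all three quantities to a common value, establishing Proposition \ref{proposition_1}.

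I do not expect a genuine obstacle: the argument is entirely a matter of matching feasible sets through the identity $\mathbf{R}_{\mathbf{x}} = \mathbf{W} + \mathbf{R}_{\mathbf{r}}$. The one point deserving care is the reverse-direction constraint manipulation, where I must confirm that the extra interference term $2^{\epsilon_2}\,\mathbf{h}^H \bar{\mathbf{R}}_{\mathbf{r}} \mathbf{h}$ is nonnegative (guaranteed by $\bar{\mathbf{R}}_{\mathbf{r}} \succeq 0$) together with the implicit $\epsilon_2 \ge 0$ so that $2^{\epsilon_2}-1 \ge 0$; both are benign. The conceptual content is simply that, with a single user, allocating no power to a dedicated sensing stream (i.e. $\mathbf{R}_{\mathbf{r}} = \mathbf{0}$) is without loss of optimality, which is exactly why interference cancellation versus non-cancellation becomes immaterial in this scenario.
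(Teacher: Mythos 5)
Your proposal is correct and follows essentially the same route as the paper: both directions are established by the same two feasibility-preserving maps, namely sending the (P6.1) optimum $(\bar{\mathbf{W}}, \bar{\mathbf{R}}_{\mathbf{r}})$ to the aggregate $\bar{\mathbf{W}} + \bar{\mathbf{R}}_{\mathbf{r}}$, which is feasible for (P4.1) by positive semidefiniteness of $\bar{\mathbf{R}}_{\mathbf{r}}$, and sending a (P4.1) point $\mathbf{R}_{\mathbf{x}}$ to the pair $(\mathbf{R}_{\mathbf{x}}, \mathbf{0})$, which is feasible for (P6.1) with $p=1$. The only differences are cosmetic: you phrase the conclusion as two inequalities while the paper argues that the (P6.1) optimum is itself a global optimum of (P4.1), and your constraint verification adds $\mathbf{h}^H \bar{\mathbf{R}}_{\mathbf{r}} \mathbf{h}$ to both sides where the paper chains the inequalities directly.
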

\begin{proof}
    Given the global optimum $\{\delta^\star, \mathbf{W}^\star, \mathbf{R}_{\mathbf{s}}^\star \}$ to (P6.1) with $p=1$, we now show that
    $\mathbf{R}_\mathbf{x}^\star = \mathbf{W}^\star + \mathbf{R}_{\mathbf{s}}^\star $ is also the global optimum to (P4.1).
    Firstly, it holds that 
    \begin{align}
        \mathbf{h}^H \mathbf{R}_\mathbf{x}^\star \mathbf{h} = \mathbf{h}^H (\mathbf{W}^\star + \mathbf{R}_{\mathbf{s}}^\star) \mathbf{h} 
        \overset{(a)}{\ge} \mathbf{h}^H \mathbf{W}^\star \mathbf{h} 
        \overset{(b)}{\ge} (2^{\epsilon_2}-1)( \mathbf{h}^H \mathbf{R}_{\mathbf{s}}^\star \mathbf{h} + \sigma_n^2 )
        \overset{(c)}{\ge} (2^{\epsilon_2}-1)\sigma_n^2,
    \end{align}

    \noindent where $(a)$ and $(c)$ are because $\mathbf{R}_{\mathbf{s}}^\star$ is positive semidefinite and $(b)$ is because of the feasibility of $\{ \mathbf{W}^\star, \mathbf{R}_{\mathbf{s}}^\star \}$ to (P6.1) with $p=1$.
    It also holds that $\mathrm{tr}(\mathbf{R}_\mathbf{x}^\star) = \mathrm{tr}(\mathbf{W}^\star + \mathbf{R}_{\mathbf{s}}^\star) = P_t$. Therefore, $\mathbf{R}_\mathbf{x}^\star$ is a feasible solution to (P4.1). 
    
    Furthermore, it can be observed for any feasible solution $\mathbf{R}_\mathbf{x}$ to (P4.1), 
    there always exists a solution $\{ \mathbf{W} = \mathbf{R}_\mathbf{x}, \mathbf{R}_\mathbf{s} = 0 \}$ that is feasible to (P6.1) with $p=1$ such that $L(\delta, \mathbf{W} + \mathbf{R}_\mathbf{s}) = L(\delta, \mathbf{R}_\mathbf{x})$. 
    Therefore, it can be shown that 
    \begin{equation}
        L(\delta^\star, \mathbf{R}_\mathbf{x}^\star) = L(\delta^\star, \mathbf{W}^\star + \mathbf{R}_{\mathbf{s}}^\star) \le L(\delta, \mathbf{W} + \mathbf{R}_\mathbf{s}) = L(\delta, \mathbf{R}_\mathbf{x}),
    \end{equation} 

    \noindent which indicates that $\mathbf{R}_\mathbf{x}^\star = \mathbf{W}^\star + \mathbf{R}_{\mathbf{s}}^\star $ is a global optimum to (P4.1). \textbf{Lemma \ref{lemma_5}} is proved.
\end{proof}

According to \textbf{Lemma \ref{lemma_3}}, \textbf{Lemma \ref{lemma_4}}, and \textbf{Lemma \ref{lemma_5}}, it holds that $L(\delta^{C1}, \bar{\mathbf{R}}_{\mathbf{x}}^{C1}) \ge L(\delta^{C0}, \bar{\mathbf{R}}_{\mathbf{x}}^{C0}) \ge L(\delta^{N}, \bar{\mathbf{R}}_{\mathbf{x}}^{N}) = L(\delta^{C1}, \bar{\mathbf{R}}_{\mathbf{x}}^{C1})$. Thus, the equality among them must hold and \textbf{Proposition \ref{proposition_1}} is finally proved.

\bibliographystyle{IEEEtran}
\begin{spacing}{1.0}
\bibliography{reference/mybib}
\end{spacing}

\end{document}